\numberwithin{equation}{section}
\theoremstyle{plain}
\newtheorem{theorem}{Theorem}[section]
\newtheorem{proposition}[theorem]{Proposition}
\newtheorem{lemma}[theorem]{Lemma}
\theoremstyle{definition}
\newtheorem{remark}[theorem]{Remark}
\newtheorem*{theorem*}{Theorem}
\begin{document}
\title{Discrete-velocity-direction models of BGK-type with minimum entropy: I. Basic idea}

\author{Qian Huang}
\address{Department of Energy and Power Engineering, Tsinghua University\\
    Beijing, 100084, China}
\email{huangqian@tsinghua.edu.cn}

\author{Yihong Chen}
\address{Department of Mathematical Sciences, Tsinghua University\\
    Beijing, 100084, China}
\email{chenyiho20@mails.tsinghua.edu.cn}

\author{Wen-An Yong*}
\address{Department of Mathematical Sciences, Tsinghua University\\
    Beijing, 100084, China}
\thanks{* Corresponding author}
\email{wayong@tsinghua.edu.cn}

\keywords{Kinetic equations; discrete-velocity-direction model (DVDM); minimum entropy principle; discrete-velocity model (DVM); extended quadrature method of moments (EQMOM)}

\vskip .2truecm
\begin{abstract}
  In this series of works, we develop a discrete-velocity-direction model (DVDM) with collisions of BGK-type for simulating rarefied flows.
  Unlike the conventional kinetic models (both BGK and discrete-velocity models), the new model restricts the transport to finite fixed directions but leaves the transport speed to be a 1-D continuous variable.
  Analogous to the BGK equation, the discrete equilibriums of the model are determined by minimizing a discrete entropy.
  In this first paper, we introduce the DVDM and investigate its basic properties, including the existence of the discrete equilibriums and the $H$-theorem.
  We also show that the discrete equilibriums can be efficiently obtained by solving a convex optimization problem.
  The proposed model provides a new way in choosing discrete velocities for the computational practice of the conventional discrete-velocity methodology.
  It also facilitates a convenient multidimensional extension of the extended quadrature method of moments.
  We validate the model with numerical experiments for two benchmark problems at moderate computational costs.
\end{abstract}

\maketitle

\normalem

\section{Introduction}

Kinetic theories and the Boltzmann equation lay the foundation for investigating non-equilibrium many-body interacting systems. Besides the classical rarefied gas dynamics \cite{Shar2015}, kinetic theories have also found substantial applications in multiphase flow problems \cite{Fried2000,Huang2020} and the emerging field of active matter \cite{bel2021,Thu2014}.
However, solving the Boltzmann-like equation can be challenging. The first obstacle is due to the collision mechanisms of gas molecules \cite{Harris2004} or aggregation-breakage of aerosol/colloid particles \cite{Fried2000,MF2013}.
To overcome this obstacle, the BGK model \cite{bgk1954} and its improvements (for instance, Shakhov model \cite{shak1968} and ES model \cite{hol1966}) have been proposed to simplify the collision term while keeping fundamental properties of the original equation and hence become prevalent \cite{Harris2004}.

The second obstacle is the velocity dependence of the unknowns which makes solving the Boltzmann equation computationally costly. Many numerical approaches have thus been developed, including the linearization method \cite{Naris2005,Shar2015}, the spectral method \cite{dima2015}, the discrete-velocity method (DVM) \cite{ans2003,gat1975,Mieu2000,Mijcp2000}, the semi-continuous method \cite{kol2003,pre2003}, and a variety of moment methods
\cite{cai2015,lever1996,Chalons2017,MF2013,Huang2020}.
All these methods aim at good mathematical properties (realizability, model stability, convergence to the original equation, etc.) and numerical performance (computational complexity, numerical stability, etc.), and have their advantages and weaknesses \cite{bel2003}.

Among the methods above, the Gaussian-extended quadrature method of moments (Gaussian-EQMOM) \cite{Chalons2017,MF2013,yuan2012} serves as the original motivation of this project. In 1-D case, this method assumes the unknown or distribution to be a linear combination of $N$ ($N \geq 1$) Gaussian functions with unknown centers and a common (unknown) variance to be inversely solved from the velocity moments of the distribution \cite{yuan2012}. The resulting moment closure system was proved to satisfy the structural stability condition widely respected by physical systems \cite{HLY2020,Yong1999}.
The quadrature-based method of moments stands out also because of its potential to characterize systems far from equilibrium \cite{LM2022,MF2013}.
However, a satisfactory extension of EQMOM to 2/3-D velocity space(s) seems not available; see Refs.~\cite{Chalons2017,MF2013}.

On the other hand, we notice that a discrete-velocity-direction model (DVDM) was proposed physically and investigated in Refs.~\cite{Zhang2008,Zhang2013,Zhang2018}.
As a semi-continuous model different from the existing ones \cite{kol2003,pre2003}, the DVDM constrains particle transport in finite fixed directions while keeping the speed continuous. The resulting model has complicated terms for collisions and therefore does not offer a convenient way to get multidimensional EQMOM \cite{Zhang2008}.
Additionally, the semi-continuous system can be numerically solved by discretizing the transport speed in each orientation \cite{Zhang2008}, and thus provides a new way in selecting discrete velocity nodes, which is distinct from the common DVM practice in a uniform cubic lattice \cite{Mieu2000}.

In this project, we propose a discrete-velocity-direction model (DVDM) with collisions of BGK-type. For the new model, the equilibriums are determined by minimizing a discrete entropy. 
We investigate the existence and uniqueness of the discrete equilibriums and establish an $H$-theorem characterizing the dissipation property of the original kinetic equation. Because the DVDM is different from the discrete-velocity BGK models \cite{Mieu2000}, the analysis is quite involved in comparison with that in Ref.~\cite{Mieu2000}.
We also show that the discrete equilibriums can be obtained efficiently by solving a convex optimization problem.
Moreover, the continuous transport speed can be treated with simple discretizations or combinations with 1-D Gaussian-EQMOM.
The latter provides a new multidimensional extension of EQMOM and yields a hyperbolic moment system, which seems satisfactory.
The proposed methods are numerically validated with two benchmark tests. Further numerical and analytical results will be reported in forthcoming papers \cite{CHYinp}.

The remainder of the paper is organized as follows. Section \ref{Sec:dvdm} presents the discrete-velocity-direction model (DVDM) with BGK-collisions.
The existence of local equilibriums is studied in Section \ref{sec:dis_equil}.
Section \ref{Sec:solve_dvdm} provides two approaches to treat the continuous velocity-modulus in DVDM, including DVD-DVM in Section \ref{subsec:dvddvm} and DVD-EQMOM in Section \ref{subsec:dvd_eqmom}.
Section \ref{sec:numer} discusses numerical issues, with the space-time discretization schemes in Section \ref{subsec:sptm} and the computation of discrete equilibrium in Section \ref{sec:diseq_comp}.
In Section \ref{Sec:numsim}, two benchmark flows are simulated.
Finally, we conclude our paper in Section \ref{Sec:concl}.

\section{Discrete-velocity-direction models of BGK-type}
\label{Sec:dvdm}

Let $\bm \xi \in \mathbb{R}^{D}$ be the molecule velocity ($D=2,3$) and $\bm x \in \mathbb{R}^{D_x}$ the spatial position ($1 \le D_x \le D$). We consider the BGK equation \cite{bgk1954} for the distribution $f=f(t,\bm x, \bm \xi)$:
\begin{equation} \label{eq:bgk}
    \left \{
    \begin{aligned}
        \partial_t f + \bm \xi \cdot \nabla_{\bm x} f &= \frac{1}{\tau} \left( \mathcal{E}[f] - f \right), \quad t\ge 0, \\
        \mathcal{E}[f] &= \exp \left( \bm{\alpha}_{eq} \cdot \bm m (\bm \xi) \right), \\
        \bm{\alpha}_{eq} &= \left( \ln \frac{\rho}{(2\pi \theta)^{D/2}} - \frac{U^2}{2\theta}, \ \frac{\bm U}{\theta}, \ -\frac{1}{\theta} \right)^T \in \mathbb{R}^{D+2}, \\
        \rho \left( 1, \bm U, E \right)^T &= \langle \bm m(\bm \xi) f \rangle \in \mathbb{R}^{D+2}, \quad
        \theta = \frac{2E - U^2}{D}.
    \end{aligned}
    \right.
\end{equation}
Here $\tau$ is the relaxation time, $\bm m (\bm \xi) = \left( 1, \ \bm \xi, \ \frac{|\bm \xi|^2}{2} \right)^T \in \mathbb{R}^{D+2}$, $U=|\bm U |$ is the Euclidean length of the vector $\bm U$,
and the bracket $\langle \cdot \rangle$ is defined as $\langle g(\bm \xi) \rangle = \int_{\mathbb{R}^D} g(\bm \xi) d \bm \xi$ for any measurable function $g(\bm \xi)$.
The local Maxwellian equilibrium $\mathcal{E}[f]$ is implicitly defined by $f$ through the macroscopic fluid density $\rho$, velocity $\bm U \in \mathbb{R}^D$, energy $E$ (or temperature $\theta$) which are the velocity moments of $f$.

It can be easily verified that $\mathcal{E}[f]$ reproduces the local macroscopic quantities \cite{Shar2015}:
\begin{equation} \label{eq:equil_mac}
    \langle \bm m(\bm \xi) \mathcal{E}[f] \rangle = \bm \rho := \rho \left( 1, \bm U, E \right)^T \in \mathbb{R}^{D+2}.
\end{equation}
Moreover, it was shown \cite{Mieu2000} that given any $\bm \rho$ with $\rho>0$ and $\theta>0$, $\mathcal{E}[f]$ is the unique solution that minimizes the following kinetic entropy $H[f]$ :
\begin{equation} \label{eq:entropy}
    H[f] = \langle f \ln f - f \rangle
\end{equation}
subject to the constraint Eq.~(\ref{eq:equil_mac}).

Solving the multidimensional BGK Eq.~(\ref{eq:bgk}) can be computationally costly.
In this work, we propose a class of discrete-velocity-direction models (DVDM) with a minimum entropy.

In the DVDM, the particle transport is limited to $N$ prescribed directions $\bm l_m \in \mathbb{R}^D$ with $|\bm{l}_m |=1$ ($m=1,\dots,N$). Denote by $\bm l_{\mathcal{L}} = \left(\bm{l}_m \right)_{m=1}^N$.
Usually, we take $N>D$ and choose the directions such that the matrix $L^T = (\bm l_1, \dots, \bm l_N) \in \mathbb{R}^{D\times N}$ has rank $D$.
The velocity distribution $f(t,\bm x,\bm \xi)$ is replaced by $N$ 1-D velocity distributions $f_{\mathcal{L}} = \{ f_m(t,\bm x, \xi) \}_{m=1}^N$ with $\xi \in \mathbb{R}$.
The governing equation for each $f_m=f_m(t,\bm x, \xi)$ has the following form:
\begin{equation} \label{eq:dvdm_bgk}
    \partial_t f_m + \xi \bm{l}_m \cdot \nabla_{\bm x} f_m = \frac{1}{\tau} (\mathcal{E}_m - f_m)
\end{equation}
with the local equilibrium $\mathcal{E_L}=\{\mathcal{E}_m(t,\bm x, \xi)\}_{m=1}^N$ yet to be determined by $f_{\mathcal{L}}$.

For further references, we introduce a bracket $\langle (\cdot)_{\mathcal{L}} \rangle_{\mathcal{L}}$ for any $N$-tuple $g_{\mathcal{L}}=\left( g_m(\xi) \right)_{m=1}^N$ as
\begin{equation} \label{eq:operator_dvdm}
    \langle g_{\mathcal{L}} \rangle_{\mathcal{L}} = \sum_{m=1}^N \int_{\mathbb{R}} g_m(\xi) d\xi.
\end{equation}
Define
\[
\rho=\langle f_{\mathcal{L}} \rangle_{\mathcal{L}}, \quad
\rho \bm U=\langle \xi \bm l_{\mathcal{L}} f_{\mathcal{L}} \rangle_{\mathcal{L}}, \quad
\rho E=\langle \frac{\xi^2}{2} f_{\mathcal{L}} \rangle_{\mathcal{L}},
\]
or equivalently
\begin{equation}  \label{eq:macpara_dvdm}
    \bm \rho = \langle \bm m_{\mathcal{L}} f_{\mathcal{L}} \rangle_{\mathcal{L}},
\end{equation}
with $\bm m_{\mathcal{L}}=\{ \bm m_m \}_{m=1}^N$ and $\bm{m}_m = \left(1, \xi \bm{l}_m, \frac{1}{2} \xi^2 \right)^T \in \mathbb{R}^{D+2}$ for $m=1,\dots,N$.

The local equilibrium $\mathcal{E_L}$ is determined so that it minimizes the discrete analogue of the kinetic entropy
\begin{equation} \label{eq:discEntropy_def}
    H[f_{\mathcal{L}}] = \langle f_{\mathcal{L}} \ln f_{\mathcal{L}} - f_{\mathcal{L}} \rangle_{\mathcal{L}}.
\end{equation}
among all possible $N$-tuples $f_{\mathcal{L}} \ge 0$ satisfying $\langle \bm m_{\mathcal{L}} f_{\mathcal{L}} \rangle_{\mathcal{L}} = \bm \rho$ for a given $\bm \rho$.
The same idea was utilized to develop a conservative and entropy-decreasing discrete-velocity model (DVM) of the BGK equation \cite{Mieu2000}.

As to the existence and uniqueness of the minimizer, we have the following result.
\begin{theorem} \label{thm:alpha}
    Suppose $f_{\mathcal{L}}=\{f_m(\xi) \}_{m=1}^N$ and $\bm \rho = \langle \bm m_{\mathcal{L}} f_{\mathcal{L}} \rangle_{\mathcal{L}}$ satisfy $f_m \geq 0$ for all $m$ and $0< |\bm \rho | < \infty$.
    Then the discrete kinetic entropy Eq.~(\ref{eq:discEntropy_def}) has a unique minimizer $\mathcal{E_L}$ satisfying the constraint $\langle \bm m_{\mathcal{L}} \mathcal{E_L} \rangle_{\mathcal{L}} = \bm \rho$.
    Moreover, the minimizer has the exponential form
    \begin{equation} \label{eq:discreteeq}
        \mathcal{E}_m = \exp(\bm \alpha \cdot \bm{m}_m)
    \end{equation}
    with a certain $\bm \alpha\in \mathbb{R}^{D+2}$.
\end{theorem}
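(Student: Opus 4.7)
\emph{Uniqueness and form of the minimizer.} Since $g\mapsto g\ln g - g$ is strictly convex on $[0,\infty)$ (with the convention $0\ln 0=0$), $H[\cdot]$ is strictly convex on the convex feasible set $\{f_{\mathcal{L}}\geq 0:\langle\bm m_{\mathcal{L}} f_{\mathcal{L}}\rangle_{\mathcal{L}}=\bm\rho\}$, so uniqueness of a minimizer is automatic. To identify its form, I set up the Lagrangian $H[f_{\mathcal{L}}]-\bm\alpha\cdot(\langle\bm m_{\mathcal{L}} f_{\mathcal{L}}\rangle_{\mathcal{L}}-\bm\rho)$ with multiplier $\bm\alpha\in\mathbb{R}^{D+2}$; formal optimality in $f_m$ gives $\ln f_m=\bm\alpha\cdot\bm m_m$, i.e.\ the exponential ansatz $\mathcal{E}_m=\exp(\bm\alpha\cdot\bm m_m)$. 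Writing $\bm\alpha=(\alpha_0,\bm\beta,-\gamma)$, integrability in $\xi$ forces $\gamma>0$, singling out the open half-space $\mathcal{A}=\{(\alpha_0,\bm\beta,-\gamma)\in\mathbb{R}^{D+2}:\gamma>0\}$ as the admissible parameter set.

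\emph{Dual formulation and sufficiency.} Rather than working with the primal problem, on $\mathcal{A}$ I would introduce the dual potential
\[
J(\bm\alpha)=\langle\exp(\bm\alpha\cdot\bm m_{\mathcal{L}})\rangle_{\mathcal{L}}-\bm\alpha\cdot\bm\rho,
\]
which is smooth by differentiation under the integral, with gradient $\nabla J(\bm\alpha)=\langle\bm m_{\mathcal{L}}\exp(\bm\alpha\cdot\bm m_{\mathcal{L}})\rangle_{\mathcal{L}}-\bm\rho$ and Hessian equal to the weighted Gram matrix $\langle\bm m_{\mathcal{L}}\bm m_{\mathcal{L}}^{T}\exp(\bm\alpha\cdot\bm m_{\mathcal{L}})\rangle_{\mathcal{L}}$. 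The rank-$D$ hypothesis on $L^{T}=(\bm l_1,\ldots,\bm l_N)$ rules out any nonzero $\bm v\in\mathbb{R}^{D+2}$ with $\bm v\cdot\bm m_m(\xi)\equiv 0$ for all $m$ and a.e.\ $\xi$, so the Hessian is positive definite and $J$ is strictly convex on $\mathcal{A}$. An interior critical point $\bm\alpha^{*}$ of $J$ then satisfies precisely the moment equation $\langle\bm m_{\mathcal{L}}\mathcal{E_L}\rangle_{\mathcal{L}}=\bm\rho$, and global optimality of $\mathcal{E_L}=\exp(\bm\alpha^{*}\cdot\bm m_{\mathcal{L}})$ in the primal is then confirmed by the pointwise Fenchel--Young inequality $ab\le a\ln a-a+e^{b}$ applied with $a=f_m$, $b=\bm\alpha^{*}\cdot\bm m_m$, summed over $m$, and integrated in $\xi$; this yields $H[f_{\mathcal{L}}]\geq H[\mathcal{E_L}]$ for every feasible $f_{\mathcal{L}}$, with equality only at $\mathcal{E_L}$.

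\emph{Main obstacle.} The genuinely non-trivial step is showing that $J$ attains its infimum inside the open set $\mathcal{A}$ rather than escaping to its boundary, i.e.\ that $J$ is coercive on $\mathcal{A}$. I would treat the two escape routes separately. As $\gamma\downarrow 0$ with $(\alpha_0,\bm\beta)$ bounded, a direct computation shows the Gaussian integrals $\int\exp((\bm\beta\cdot\bm l_m)\xi-\gamma\xi^{2}/2)\,d\xi$ grow like $\gamma^{-1/2}\exp((\bm\beta\cdot\bm l_m)^{2}/(2\gamma))$, which overwhelms the linear term $-\bm\alpha\cdot\bm\rho$. For sequences with $\|\bm\alpha\|\to\infty$ while $\gamma$ stays bounded below, I would exploit realizability: the hypothesis $\bm\rho=\langle\bm m_{\mathcal{L}} f_{\mathcal{L}}\rangle_{\mathcal{L}}$ with $f_{\mathcal{L}}\geq 0$ and $0<|\bm\rho|<\infty$ forces $\rho>0$ and, by Cauchy--Schwarz applied to each integrable $f_m$, also $\theta>0$, so $\bm\rho$ lies in the interior of the realizable moment cone. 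Combined with the rank-$D$ condition, this makes $\langle\exp(\bm\alpha\cdot\bm m_{\mathcal{L}})\rangle_{\mathcal{L}}$ outgrow $\bm\alpha\cdot\bm\rho$ along every direction of escape. Coercivity plus strict convexity on the open convex set $\mathcal{A}$ then furnishes the unique interior minimizer $\bm\alpha^{*}$, completing the proof; this coercivity argument is the step I expect to be the main source of technical work and the reason the authors describe the analysis as more involved than the discrete-velocity case of Ref.~\cite{Mieu2000}.
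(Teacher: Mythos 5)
Your uniqueness and ``sufficiency'' parts coincide with the paper's argument (the tangent/Fenchel--Young inequality applied with $b=\bm\alpha\cdot\bm m_m$, plus strict convexity of the Hessian of $J$ via the rank-$D$ assumption, as in Proposition~\ref{prop1}). The genuine gap is in the existence/coercivity step, and it is precisely where you invoke realizability. You deduce only $\rho>0$ and $\theta>0$ and then assert that ``$\bm\rho$ lies in the interior of the realizable moment cone,'' but $\theta>0$ is the realizability condition for the \emph{continuous} BGK model, not for the DVDM: because transport is restricted to the finitely many directions $\bm l_m$, the moments produced by nonnegative $N$-tuples satisfy the strictly stronger inequality $2E> U^2 s_{min}^2$, where $s_{min}=s_{min}(\bm U/U)\ge 1$ is the minimum of $\sum_m|a_m|$ over decompositions $\sum_m a_m\bm l_m=\bm U/U$ (Lemma~\ref{prop:rhoimage}). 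This strengthened inequality is not a technical refinement; it is exactly what is needed to handle the escape directions $\bm w\in\mathbb{S}^{D+1}_-$ along which the exponential part of $J(t\bm w)$ in \eqref{eq:Jtw} does \emph{not} blow up, namely those with $(\hat{\bm w}\cdot\bm l_m)^2\le 2w_0w_{D+1}$ for all $m$: there one must prove $\bm\rho\cdot\bm w<0$ so that the linear term $-t\,\bm\rho\cdot\bm w$ drives $J\to+\infty$, and the paper's chain $-w_0-Ew_{D+1}\ge 2\sqrt{Ew_0w_{D+1}}>U\sum_m|a_m|\sqrt{2w_0w_{D+1}}\ge|\hat{\bm w}\cdot\bm U|$ uses $\sqrt{2E}>U\sum_m|a_m|$ in an essential way. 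With only $\theta>0$ (i.e.\ $2E>U^2$) this inequality can fail, so your claim that the exponential term ``outgrows $\bm\alpha\cdot\bm\rho$ along every direction of escape'' is not established; indeed, for $\bm\rho$ with $U^2/2<E\le U^2 s_{min}^2/2$ no minimizer exists, which shows that an argument relying only on $\theta>0$ cannot work.

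Two further points in the same step are glossed over. First, your two escape regimes ($\gamma\downarrow 0$ with $(\alpha_0,\bm\beta)$ bounded; $\|\bm\alpha\|\to\infty$ with $\gamma$ bounded below) do not exhaust the possibilities: the mixed regime $\gamma\to 0$ with $\alpha_0\to-\infty$ or $|\bm\beta|\to\infty$ needs the quantitative lower bound \eqref{eq:estimate_J} and a case analysis (the paper follows Mieussens' five cases), and the paper's Lemma~\ref{lem:alpha_to_0} is stated for arbitrary sequences with $\alpha_{D+1}\to 0$ for exactly this reason. Second, even granting directionwise divergence $J(t\bm w)\to\infty$ for each $\bm w$, this does not by itself give coercivity because $\mathbb{S}^{D+1}_-$ is \emph{not} compact (unlike the DVM case); the paper has to combine a uniform-on-neighborhoods version (Lemma~\ref{lem:uniform_on_ball}) with the cone truncation $C_r$ and a separate treatment of sequences with $\sqrt{\alpha_0^2+|\hat{\bm\alpha}|^2}/|\alpha_{D+1}|\to\infty$. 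So the overall architecture you propose (dual convex problem, minimize $J$ on the open half-space) is the paper's, but the two ingredients that make the DVDM existence proof go through --- the direction-dependent realizability bound with $s_{min}\ge 1$ and the compactness workaround for $\mathbb{S}^{D+1}_-$ --- are missing from your sketch, and the substitute you offer ($\theta>0$ plus the rank condition) is insufficient.
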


The proof of this theorem will be completed in the next section, where we show that there exists $\bm \alpha= (\alpha_0,\hat{\bm{\alpha}}^T, \alpha_{D+1})^T \in \mathbb{R}^{D+2}$ with $\hat{\bm \alpha} \in \mathbb{R}^D$ such that
\begin{equation}\label{eq:alpha_def}
    \langle \bm{m}_{\mathcal{L}} \exp \left(\bm \alpha \cdot \bm{m}_{\mathcal{L}} \right) \rangle_{\mathcal{L}} = \bm \rho.
\end{equation}
This implies $\alpha_{D+1}<0$.
Having such an $\bm \alpha$, we can show that the local equilibrium $\mathcal{E_L}$ with $\mathcal{E}_m = \exp(\bm \alpha \cdot \bm m_m)$ is the unique minimizer.

To do this, we use the fact that the function $(x\ln x - x)$ is strictly convex. Then, for any $f_{\mathcal{L}} = \{f_m\}_{m=1}^N$ satisfying $\langle \bm{m}_{\mathcal{L}} f_{\mathcal{L}} \rangle_{\mathcal{L}} = \bm \rho$,
we have
\[
f_m \ln f_m - f_m \geq \mathcal{E}_m \ln \mathcal{E}_m - \mathcal{E}_m + (\ln \mathcal{E}_m)(f_m-\mathcal{E}_m) = \mathcal{E}_m \ln \mathcal{E}_m - \mathcal{E}_m + (\bm \alpha \cdot \bm{m}_m)(f_m-\mathcal{E}_m)
\]
for all $m$, and thereby,
\begin{align*}
    H[f_{\mathcal{L}}]
    & \geq H[\mathcal{E}_{\mathcal{L}}] + \langle (\bm \alpha \cdot \bm{m}_{\mathcal{L}})(f_{\mathcal{L}}-\mathcal{E_L}) \rangle_{\mathcal{L}} \\
    & = H[\mathcal{E}_{\mathcal{L}}] + \alpha_0 \langle f_{\mathcal{L}}-\mathcal{E_L} \rangle_{\mathcal{L}} + \hat{\bm \alpha} \cdot \langle \xi \bm{l}_{\mathcal{L}} (f_\mathcal{L}-\mathcal{E_L}) \rangle_{\mathcal{L}} + \alpha_{D+1} \langle \frac{\xi^2}{2} (f_\mathcal{L}-\mathcal{E_L}) \rangle_{\mathcal{L}} \\
    & = H[\mathcal{E}_{\mathcal{L}}].
\end{align*}
The equality holds if and only if $f_m = \mathcal{E}_m$ for all $m = 1,...,N$.
Thus, if there is another minimizer of the exponential form, say, $\exp(\bm \alpha' \cdot \bm m_m)$, then $0 = (\bm \alpha' - \bm \alpha) \cdot \bm m_m = (\alpha_0' - \alpha_0) + \xi \bm l_m \cdot (\hat{\bm \alpha}' - \hat{\bm \alpha}) + \frac{\xi^2}{2}(\alpha_{D+1}'-\alpha_{D+1})$,
which means $\bm \alpha'=\bm \alpha$ by choosing different $\xi$ and using the assumption that $L^T=(\bm l_1,\dots,\bm l_N)$ is of full-rank. Hence, the uniqueness of the minimizer is proved.

With the local equilibrium $\mathcal{E_L}$ determined as above, the DVDM Eq.~(\ref{eq:dvdm_bgk}) is well defined as
\begin{equation} \label{eq:dvdm_bgk_full}
    \left \{
    \begin{aligned}
        \partial_t f_m + \xi \bm{l}_m \cdot \nabla_{\bm x} f_m &= \frac{1}{\tau} (\exp(\bm \alpha
        \cdot \bm m_m) - f_m), \quad m=1,\dots,N, \\
        \langle \bm m_{\mathcal{L}} \exp (\bm \alpha \cdot \bm m_{\mathcal{L}}) \rangle_{\mathcal{L}} &= \bm \rho := \langle \bm m_{\mathcal{L}} f_{\mathcal{L}} \rangle_{\mathcal{L}}.
    \end{aligned}
    \right.
\end{equation}
The $N$ differential equations are coupled through $\bm \alpha$ determined by the last $(D+2)$ nonlinear algebraic equations in terms of $f_{\mathcal{L}}=\{f_m\}_{m=1}^N$.
Note that the local equilibrium in Eq.~(\ref{eq:dvdm_bgk_full}) can be rewritten as
\begin{equation} \label{eq:dvdm_gau}
    \mathcal{E}_m = \exp(\bm \alpha \cdot \bm{m}_m)
    = \frac{\rho_m}{\sqrt{2\pi} \sigma} \exp \left(-\frac{(\xi - u_m)^2}{2\sigma^2} \right)
\end{equation}
for $\alpha_{D+1}<0$. The parameter $(\rho_m,  u_m, \sigma^2)$ is related to $\bm \alpha$ as follows:
\begin{equation}  \label{eq:stand_rel}
    \sigma^2 = -\frac{1}{\alpha_{D+1}}, \quad
    u_m= \bm{l}_m \cdot \left( \sigma^2 \hat{\bm \alpha} \right), \quad
    \rho_m= \sqrt{2\pi} \sigma \exp \left( \alpha_0 + \frac{u_m^2}{2\sigma^2} \right).
\end{equation}

\begin{remark} \label{rem:dvdm_dirNcons}
    While $\mathcal{E_L}$ preserves the local fluid quantity $\bm \rho = \langle \bm m_{\mathcal{L}} f_{\mathcal{L}} \rangle_{\mathcal{L}}$, the conservation property generally does not hold in any specific direction.
    That is to say, $\rho_m \ne \int_{\mathbb{R}} f_m d\xi$, $\rho_m u_m \ne \int_{\mathbb{R}} \xi f_m d\xi$, and $\rho_m (u_m^2 + \sigma^2) \ne \int_{\mathbb{R}} \xi^2 f_m d\xi$ for $m=1,\dots,N$.
    Indeed, this is the key mechanism reallocating molecules among the prescribed directions in the DVDM-BGK model Eq.~(\ref{eq:dvdm_bgk_full}).
\end{remark}

The DVDM-BGK model Eq.~(\ref{eq:dvdm_bgk_full}) has also the following properties including an $H$-theorem.

\begin{theorem}
    Suppose the DVDM-BGK Eq.~(\ref{eq:dvdm_bgk_full})
    with positive initial data $f_m(0,\bm x, \xi)>0$ has a solution $f_{\mathcal{L}} = \{ f_m(t,\bm x, \xi) \}_{m=1}^N$.
    Then we have
    \begin{align}
        f_m(t,\bm x,\xi) > 0, \quad \forall \ m,t, &\bm x,\xi, \label{eq:dvdm_pos} \\
        \partial_t \langle \bm m_{\mathcal{L}} f_{\mathcal{L}} \rangle_{\mathcal{L}} + \langle \bm m_{\mathcal{L}} \xi \bm l_{\mathcal{L}} \cdot \nabla_{\bm x}f_{\mathcal{L}} \rangle_{\mathcal{L}} &= 0, \label{eq:dvdm_consv} \\
        \partial_t H[f_{\mathcal{L}}] + \langle \xi \bm l_{\mathcal{L}} \cdot \nabla_{\bm x} (f_{\mathcal{L}} \ln f_{\mathcal{L}} - f_{\mathcal{L}}) \rangle_{\mathcal{L}} &\leq 0. \label{eq:dvdm_Hthm}
    \end{align}
\end{theorem}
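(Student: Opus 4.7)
The plan is to treat the three claims in order of increasing subtlety, relying on the characteristic structure of the transport, the defining moment constraint \eqref{eq:alpha_def}, and a Gibbs-type inequality that exploits the minimum-entropy characterization already established in Theorem \ref{thm:alpha}.

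For the positivity statement \eqref{eq:dvdm_pos}, I would integrate the scalar ODE that \eqref{eq:dvdm_bgk_full} becomes along the characteristic $\bm{x}(s)=\bm{x}-(t-s)\xi\bm{l}_m$ issuing from $(t,\bm{x},\xi)$. Using the integrating factor $e^{s/\tau}$ yields the Duhamel representation
\begin{equation*}
  f_m(t,\bm{x},\xi)=e^{-t/\tau}f_m(0,\bm{x}-t\xi\bm{l}_m,\xi)+\tfrac{1}{\tau}\int_0^t e^{-(t-s)/\tau}\mathcal{E}_m(s,\bm{x}-(t-s)\xi\bm{l}_m,\xi)\,ds,
\end{equation*}
and \eqref{eq:dvdm_pos} is immediate from $\mathcal{E}_m=\exp(\bm{\alpha}\cdot\bm{m}_m)>0$ and the assumed positivity of the initial data.

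For the local conservation law \eqref{eq:dvdm_consv}, I would simply multiply the $m$th equation by $\bm{m}_m$, sum over $m$, and apply $\langle\cdot\rangle_{\mathcal{L}}$. The transport and time-derivative terms yield exactly the left-hand side of \eqref{eq:dvdm_consv}, while the collision term becomes $\tau^{-1}\langle\bm{m}_{\mathcal{L}}(\mathcal{E}_{\mathcal{L}}-f_{\mathcal{L}})\rangle_{\mathcal{L}}$, which vanishes by the defining constraint \eqref{eq:alpha_def}. This is the direct discrete analogue of the conservation identity for the continuous BGK equation.

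The main work is \eqref{eq:dvdm_Hthm}. I would multiply \eqref{eq:dvdm_bgk_full} by $\ln f_m$, use $(\ln f_m)\partial_t f_m=\partial_t(f_m\ln f_m-f_m)$ and the analogous identity for the spatial derivative (valid by positivity from step one), and then sum and integrate to get
\begin{equation*}
  \partial_t H[f_{\mathcal{L}}]+\langle\xi\bm{l}_{\mathcal{L}}\cdot\nabla_{\bm{x}}(f_{\mathcal{L}}\ln f_{\mathcal{L}}-f_{\mathcal{L}})\rangle_{\mathcal{L}}=\tfrac{1}{\tau}\langle(\ln f_{\mathcal{L}})(\mathcal{E}_{\mathcal{L}}-f_{\mathcal{L}})\rangle_{\mathcal{L}}.
\end{equation*}
It remains to show the collision production on the right is non-positive. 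The clean way, which I expect to be the most delicate point, is to split
\begin{equation*}
  \langle(\ln f_{\mathcal{L}})(\mathcal{E}_{\mathcal{L}}-f_{\mathcal{L}})\rangle_{\mathcal{L}}=\langle(\ln f_{\mathcal{L}}-\ln\mathcal{E}_{\mathcal{L}})(\mathcal{E}_{\mathcal{L}}-f_{\mathcal{L}})\rangle_{\mathcal{L}}+\langle(\bm{\alpha}\cdot\bm{m}_{\mathcal{L}})(\mathcal{E}_{\mathcal{L}}-f_{\mathcal{L}})\rangle_{\mathcal{L}}.
\end{equation*}
The second term vanishes because $\bm{\alpha}$ is a constant vector (in $\xi$) and $\langle\bm{m}_{\mathcal{L}}(\mathcal{E}_{\mathcal{L}}-f_{\mathcal{L}})\rangle_{\mathcal{L}}=0$ by \eqref{eq:alpha_def}; the integrand of the first term is pointwise $\le 0$ since $\ln$ is increasing, so $(\ln a-\ln b)(b-a)\le 0$ for any $a,b>0$. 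An equally valid alternative is to invoke strict convexity of $x\ln x-x$ directly, as in the proof of Theorem \ref{thm:alpha}, to bound the collision term by $H[\mathcal{E}_{\mathcal{L}}]-H[f_{\mathcal{L}}]\le 0$, the last inequality being the minimum-entropy property of $\mathcal{E}_{\mathcal{L}}$. The subtle point to double-check is that the integrations by parts and the differentiation under $\langle\cdot\rangle_{\mathcal{L}}$ are justified by the assumed regularity of the solution and the decay of $f_m$ as $|\xi|\to\infty$; otherwise the argument is purely algebraic.
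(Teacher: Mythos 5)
Your proposal is correct and follows essentially the same route as the paper: a Duhamel/characteristics representation for positivity, applying $\langle \bm m_{\mathcal{L}}(\cdot)\rangle_{\mathcal{L}}$ together with the constraint \eqref{eq:alpha_def} for conservation, and multiplying by $\ln f_m$ for the $H$-theorem. Your splitting of the collision production into $\langle(\ln f_{\mathcal{L}}-\ln\mathcal{E}_{\mathcal{L}})(\mathcal{E}_{\mathcal{L}}-f_{\mathcal{L}})\rangle_{\mathcal{L}}\le 0$ plus a vanishing moment term is just the integrated form of the paper's pointwise inequality $(\mathcal{E}_m-f_m)\ln f_m\le(\mathcal{E}_m-f_m)\ln\mathcal{E}_m$ combined with $\langle(\mathcal{E}_{\mathcal{L}}-f_{\mathcal{L}})\ln\mathcal{E}_{\mathcal{L}}\rangle_{\mathcal{L}}=0$, so the two arguments coincide.
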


\begin{proof}
    For Eq.~(\ref{eq:dvdm_pos}), we define $g_m(t,\bm x,\xi) = f_m(t,\bm x + \xi t \bm l_m,\xi)$ and deduce from Eq.~(\ref{eq:dvdm_bgk_full}) that
    \[
    \begin{aligned}
        \frac{d}{dt}g_m(t,\bm x,\xi) &= \partial_t f_m(t,\bm x + \xi t \bm l_m,\xi) + \xi \bm l_m \cdot \nabla_{\bm x} f_m(t,\bm x + \xi t \bm l_m,\xi) \\
        &= \frac{1}{\tau} \left (\mathcal{E}_m (t,\bm x + \xi t \bm l_m,\xi) - g_m (t,\bm x,\xi) \right ),
    \end{aligned}
    \]
    yielding
    \[
    \frac{d}{dt} \left( e^{\frac{t}{\tau}} g_m(t,\bm x,\xi) \right) = \frac{1}{\tau} e^{\frac{t}{\tau}} \mathcal{E}_m (t,\bm x + \xi t \bm l_m,\xi).
    \]
    Solving $g_m(t,\bm x,\xi)$ from the above equation, we obtain
    \[
    f_m(t,\bm x,\xi) = \frac{1}{\tau} \int_0^t e^{\frac{s-t}{\tau}} \mathcal{E}_m (s,\bm x + \xi (s-t) \bm l_m,\xi) \mathrm{d}s + e^{\frac{-t}{\tau}} f_m(0,\bm x - \xi t \bm l_m,\xi).
    \]
    Thus, Eq.~(\ref{eq:dvdm_pos}) follows immediately from the positivity of $f_m(0,\bm x, \xi)$ and $\mathcal{E}_m = \exp (\bm \alpha \cdot \bm m_m)$.

    To derive Eq.~(\ref{eq:dvdm_consv}), we just exert the operation $\langle \bm m_{\mathcal{L}} (\cdot)_{\mathcal{L}} \rangle_{\mathcal{L}}$ on both sides of the differential equations of Eq.~(\ref{eq:dvdm_bgk_full}) and notice that the RHS vanishes as
    $\langle \bm m_{\mathcal{L}} \left( \exp(\bm \alpha \cdot \bm m_{\mathcal{L}}) -f_{\mathcal{L}} \right) \rangle_{\mathcal{L}}=0$.

    For Eq.~(\ref{eq:dvdm_Hthm}), we multiply the two sides of Eq.~(\ref{eq:dvdm_bgk}) with $\ln f_m$ to obtain
    \[
    \begin{aligned}
        \partial_t (f_m \ln f_m - f_m) + \xi \bm l_m \cdot \nabla_{\bm x} (f_m \ln f_m - f_m) &= \frac{1}{\tau} (\mathcal{E}_m - f_m) (\ln f_m) \\
        & \leq \frac{1}{\tau} (\mathcal{E}_m - f_m) (\ln \mathcal{E}_m),
    \end{aligned}
    \]
    where the inequality $(x-y)(\ln x - \ln y) \ge 0$ has been used. Then, Eq.~(\ref{eq:dvdm_Hthm}) follows by applying $\langle (\cdot) \rangle_{\mathcal{L}}$ to both sides of the last inequality and noticing that $\langle (\mathcal{E}_{\mathcal{L}} - f_{\mathcal{L}}) (\ln \mathcal{E}_{\mathcal{L}}) \rangle_{\mathcal{L}} = 0$.
    This completes the proof.
\end{proof}

Note that Eq.~(\ref{eq:dvdm_consv}) is just the classical conservation laws for the macroscopic fluid quantities:
\[
\partial_t \bm \rho + \nabla_{\bm x} \cdot
\begin{pmatrix}
    \rho \bm U \\
    \rho \bm U \otimes \bm U + \mathbf{P} \\
    \rho \bm U E + \mathbf{P} \cdot \bm U + \bm q
\end{pmatrix}
= \bm 0.
\]
Here $\mathbf{P} = \left \langle (\xi \bm{l}_{\mathcal{L}} - \bm U)^{\otimes 2} f_{\mathcal{L}} \right \rangle_{\mathcal{L}}$ and
$\bm q = \frac{1}{2} \langle (\xi \bm{l}_{\mathcal{L}} - \bm U) |\xi \bm{l}_{\mathcal{L}} - \bm U|^2 f_{\mathcal{L}} \rangle_{\mathcal{L}}$.

\begin{remark} [Planar flows]
  For real-world planar rarefied flows, we have $D=3$, $\bm \xi = (\xi_x,\xi_y,\xi_z)^T$ and $D_x=2$. In DVDM, it is straightforward to select velocity orientations $\bm l_m$ in $\mathbb{R}^3$ to obtain the governing Eq.~(\ref{eq:dvdm_bgk}) \cite{Zhang2008}. On the other hand, before discretizing velocity directions, we can also resort to the technique of reduced distribution functions to eliminate the dependence of $\xi_z$.
  There are two approaches for this purpose: ({\it i}) Define $(g,h)^T = \int (1, \xi_z^2)^T f d\xi_z$ \cite{Chu1965}; ({\it ii}) Define $(g,h)^T = \frac{1}{\sqrt{\pi}} \int \left(e^{-\xi_z^2}, \left( \xi_z^2-\frac{1}{2} \right)e^{-\xi_z^2} \right)^T f d\xi_z$ \cite{Shar2015}.
  In both cases we have $g = g(t,\bm x, \xi_x, \xi_y)$ and $h = h(t,\bm x, \xi_x, \xi_y)$. The governing equations for $g$ and $h$ can be derived from the BGK equation and the orientations $\bm l_m$ are hence chosen on the $(\xi_x,\xi_y)$-plane. The discrete equilibriums can be modeled and solved out of the minimum entropy principle in a similar manner.
\end{remark}

We conclude this section with the following proposition for an alternative way to solve $\bm \alpha$ from the nonlinear algebraic equations Eq.~(\ref{eq:alpha_def}).

\begin{proposition}\label{prop1}
    $\bm \alpha$ solves Eq.~(\ref{eq:alpha_def}) if and only if $\bm \alpha$ minimizes
    \begin{equation} \label{eq:J_def}
        J(\bm \alpha) := \langle \exp \left( \bm \alpha \cdot \bm{m}_{\mathcal{L}} \right) \rangle_{\mathcal{L}} - \bm \rho \cdot \bm \alpha
    \end{equation}
    for $\bm\alpha=(\alpha_0,\hat{\bm \alpha},\alpha_{D+1})$ with $\alpha_{D+1}<0$.
\end{proposition}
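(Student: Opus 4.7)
The plan is to recognize $J$ as a strictly convex function on the open half-space $\Omega = \{\bm \alpha \in \mathbb{R}^{D+2} : \alpha_{D+1} < 0\}$ and identify Eq.~(\ref{eq:alpha_def}) with the first-order optimality condition $\nabla J(\bm \alpha) = 0$. On $\Omega$, for each $m$ the integrand $\exp(\bm \alpha \cdot \bm m_m) = \exp\bigl(\alpha_0 + \xi \bm l_m \cdot \hat{\bm \alpha} + \frac{\xi^2}{2}\alpha_{D+1}\bigr)$ is a shifted Gaussian in $\xi$, so $J(\bm \alpha)$ is finite and $C^\infty$-smooth, and differentiation in $\bm \alpha$ may be carried out under the integral sign by a standard dominated-convergence argument (on any compact sub-neighborhood of $\Omega$, the integrands and all their $\bm \alpha$-derivatives are bounded by an $L^1(\mathbb{R})$ Gaussian majorant).

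Having this, I would first compute
\[
\nabla J(\bm \alpha) \;=\; \langle \bm m_{\mathcal{L}} \exp(\bm \alpha \cdot \bm m_{\mathcal{L}}) \rangle_{\mathcal{L}} - \bm \rho,
\]
which makes it evident that $\nabla J(\bm \alpha) = 0$ is literally Eq.~(\ref{eq:alpha_def}). Next I would compute the Hessian
\[
\nabla^2 J(\bm \alpha) \;=\; \langle \bm m_{\mathcal{L}} \bm m_{\mathcal{L}}^T \exp(\bm \alpha \cdot \bm m_{\mathcal{L}}) \rangle_{\mathcal{L}},
\]
which is manifestly positive semidefinite.

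The main (and essentially only non-routine) step is to upgrade this to strict positive definiteness, which is where the full-rank hypothesis on $L^T$ enters. Suppose $\bm v = (v_0, \hat{\bm v}^T, v_{D+1})^T \in \mathbb{R}^{D+2}$ satisfies $\bm v^T \nabla^2 J(\bm \alpha) \bm v = 0$. Since the integrand is nonnegative and $\exp(\bm \alpha \cdot \bm m_m) > 0$, this forces the polynomial $\bm v \cdot \bm m_m = v_0 + \xi \bm l_m \cdot \hat{\bm v} + \tfrac{\xi^2}{2} v_{D+1}$ to vanish identically in $\xi$ for every $m = 1,\dots,N$. Equating coefficients gives $v_0 = 0$, $v_{D+1} = 0$, and $\bm l_m \cdot \hat{\bm v} = 0$ for all $m$; the rank-$D$ assumption on $L^T = (\bm l_1, \dots, \bm l_N)$ then yields $\hat{\bm v} = \bm 0$. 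Hence $\nabla^2 J(\bm \alpha) \succ 0$ throughout $\Omega$, so $J$ is strictly convex on the convex open set $\Omega$.

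The conclusion is immediate from convex analysis: on a convex open set a strictly convex $C^1$ function has at most one critical point, and any such critical point is the unique global minimizer. Therefore $\bm \alpha \in \Omega$ minimizes $J$ if and only if $\nabla J(\bm \alpha) = 0$, i.e.\ if and only if $\bm \alpha$ solves Eq.~(\ref{eq:alpha_def}). The existence of such an $\bm \alpha$ is already guaranteed by Theorem~\ref{thm:alpha} (which produces a solution with $\alpha_{D+1} < 0$), so the proposition fully characterizes Eq.~(\ref{eq:alpha_def}) as a convex optimization problem; this is precisely the structural fact that will be exploited numerically in Section~\ref{sec:diseq_comp}.
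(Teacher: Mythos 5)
Your proposal is correct and follows essentially the same route as the paper: identify Eq.~(\ref{eq:alpha_def}) with $\nabla J(\bm\alpha)=0$, compute the Hessian $\langle \bm m_{\mathcal{L}}\otimes\bm m_{\mathcal{L}}\exp(\bm\alpha\cdot\bm m_{\mathcal{L}})\rangle_{\mathcal{L}}$, and use the full-rank assumption on $L^T$ to get strict positive definiteness, hence strict convexity and the equivalence of critical points with minimizers. Your added details (the dominated-convergence justification for differentiating under the integral and the explicit coefficient matching in $\xi$) only flesh out steps the paper leaves implicit; just note that the existence remark at the end is tangential, since the paper deliberately defers existence to Section~\ref{sec:dis_equil} and does not claim it in this proposition.
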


\begin{proof}
    If $\bm \alpha$ is a minimizer, we have $\nabla J (\bm \alpha) = 0$, which is just Eq.~(\ref{eq:alpha_def}).
    Conversely, if $\bm \alpha$ solves Eq.~(\ref{eq:alpha_def}), then we have $\nabla J(\bm \alpha)=0$. Thus, it suffices to show that $J=J(\bm \alpha)$ is strictly convex. To do so, we compute the Hessian
    \[
    \nabla_{\bm \alpha \bm \alpha} J = \langle \bm{m}_{\mathcal{L}} \otimes \bm{m}_{\mathcal{L}} \exp \left(\bm \alpha \cdot \bm{m}_{\mathcal{L}} \right) \rangle_{\mathcal{L}}
    \]
    and consider the quadratic form
    \[
    \bm{\eta}^T \nabla_{\bm \alpha \bm \alpha} J \bm \eta = \langle (\bm{m}_{\mathcal{L}} \cdot \bm{\eta})^2 \exp \left(\bm \alpha \cdot \bm{m}_{\mathcal{L}} \right) \rangle_{\mathcal{L}} \geq 0
    \]
    for any $\bm \eta \in \mathbb{R}^{D+2}$.
    The equality holds if and only if $\bm{m}_m \cdot \bm \eta = 0$ for all $m=1,\dots, N$ and thus $\bm \eta = \bm 0$ due to the assumption that $L^T=(\bm l_1,...,\bm l_N)$ is of full-rank. This indicates that the Hessian $\nabla_{\bm \alpha \bm \alpha} J$ is positive definite and therefore $J$ is strictly convex.
    This completes the proof.
\end{proof}

\begin{remark}
    This proposition does not claim the existence of the solution to Eq.~(\ref{eq:alpha_def}). But the strict convexity of $J=J(\bm \alpha)$ implies the uniqueness.
\end{remark}

\section{Existence of $\alpha$} \label{sec:dis_equil}

This section is devoted to completing the proof of Theorem~\ref{thm:alpha}. Namely, we will prove that $J(\bm \alpha)$ defined in Eq.~(\ref{eq:J_def})
attains its minimum in the open set $\mathcal{D}:= \mathbb{R} \times \mathbb{R}^D \times \mathbb{R}_-$. Throughout this section, we assume that the given $f_{\mathcal{L}}$ and $\bm \rho = \langle \bm m_{\mathcal{L}} f_{\mathcal{L}} \rangle_{\mathcal{L}}$ satisfy the constraints in Theorem~\ref{thm:alpha}: $f_m \ge 0$ for all $m$ and $0<| \bm \rho | < \infty$.

First of all, we give a further constraint on $\bm \rho$.

\begin{lemma} \label{prop:rhoimage}
    Let $\bm \rho = (\rho,\rho \bm U, \rho E)^T \in \mathbb{R}^{D+2}$ be computed from $ f_{\mathcal{L}} = \{f_m\}_{m=1}^N$ in terms of $\bm \rho = \langle \bm m_{\mathcal{L}} f_{\mathcal{L}} \rangle_{\mathcal{L}}$. Then there is a constant $s_{min} \ge 1$ such that
    \begin{equation*}
        \rho > 0,\quad
        E > \frac{1}{2}U^2 s_{min}^2.
    \end{equation*}
\end{lemma}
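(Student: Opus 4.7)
The plan is to prove the two claims independently. The bound $\rho>0$ is a one-line contradiction from the non-negativity of the $f_m$'s, while the second estimate reduces to the strict inequality $2E>U^2$ between fluid moments, which I would derive by telescoping three Cauchy--Schwarz-type estimates on the per-direction moments $a_m := \int_{\mathbb{R}} f_m\,d\xi$, $b_m := \int_{\mathbb{R}} \xi f_m\,d\xi$, $c_m := \int_{\mathbb{R}} \xi^2 f_m\,d\xi$, and then upgrading each step to a strict inequality. The main obstacle is precisely the strictness, which is what allows $s_{min}$ to be chosen $\geq 1$ rather than merely positive.

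For $\rho>0$: if $\rho=0$ then $f_m\geq 0$ together with $\sum_m \int_{\mathbb{R}} f_m\,d\xi = 0$ forces $f_m\equiv 0$ a.e.\ for every $m$, hence $\bm{\rho}=0$, contradicting $|\bm{\rho}|>0$. For the moment bound, I would write $\rho=\sum_m a_m$, $\rho \bm{U} = \sum_m b_m \bm{l}_m$, $2\rho E = \sum_m c_m$ and chain
\[
(\rho U)^2 = \Big|\sum_{m} b_m \bm{l}_m\Big|^2 \le \Big(\sum_{m} |b_m|\Big)^2 \le \Big(\sum_{m} \sqrt{a_m c_m}\Big)^2 \le \Big(\sum_{m} a_m\Big)\Big(\sum_{m} c_m\Big) = 2\rho^2 E,
\]
applying in order the triangle inequality with $|\bm{l}_m|=1$, the Cauchy--Schwarz inequality to $b_m = \int_{\mathbb{R}} (\xi \sqrt{f_m})\sqrt{f_m}\,d\xi$, and Cauchy--Schwarz on the $\mathbb{R}^N$-vectors $(\sqrt{a_m})_{m=1}^N$ and $(\sqrt{c_m})_{m=1}^N$. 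This already yields $U^2 \le 2E$.

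To upgrade to a strict inequality I would split into two cases. If $U=0$, the target reduces to $E>0$; otherwise $2\rho E = \sum_m \int_{\mathbb{R}} \xi^2 f_m\,d\xi = 0$ together with $f_m\geq 0$ would force each $f_m\equiv 0$ a.e., contradicting $\rho>0$. If $U>0$, at least one $b_m$ is nonzero, so the corresponding $f_m\not\equiv 0$; the equality case of the per-integral Cauchy--Schwarz would require $\xi$ to be constant a.e.\ on $\{f_m>0\}$, which is impossible for an ordinary $L^1$ function (Dirac measures being excluded by the standing hypothesis). Hence $|b_m|<\sqrt{a_m c_m}$ strictly, and this strictness propagates through the chain to give $U^2 < 2E$. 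Any $s_{min}\in [1,\sqrt{2E}/U)$ (for instance $s_{min}=1$) then satisfies the conclusion; the lemma's phrasing leaves room for choosing a larger $s_{min}$ later in the existence argument for Theorem~\ref{thm:alpha}.
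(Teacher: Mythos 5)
Your argument for $\rho>0$ and for the strict inequality $2E>U^2$ is sound (the strictness via the equality case of Cauchy--Schwarz is essentially the paper's own device), but it only delivers the lemma with the trivial constant $s_{min}=1$, and that misses the actual content of the statement. In the paper, $s_{min}$ is not an arbitrary number $\ge 1$ to be fixed later: it is the geometric quantity $s_{min}(\bm e):=\min\{\sum_{m}|a_m| : \sum_{m} a_m\bm l_m=\bm e\}$ with $\bm e=\bm U/U$, which depends only on the prescribed directions and the direction of $\bm U$ and is in general strictly larger than $1$. This stronger bound $E>\tfrac12 U^2 s_{min}(\bm e)^2$ is precisely what is invoked in the proof of Lemma~\ref{lem:uniform_on_ball} (one needs coefficients with $U\sum_m a_m\bm l_m=\bm U$ and $\sqrt{2E}>U\sum_m|a_m|$) and what Appendix~\ref{sec:appendix_realize} computes explicitly for $D=2$, leading to the temperature constraint \eqref{eq:real_th}; it is also the point of the remark that DVDM pays a price relative to the continuous model. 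Your proposed choice $s_{min}\in[1,\sqrt{2E}/U)$ depends on the moments $(E,U)$ themselves, which makes the statement tautological and unusable downstream.

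The gap sits exactly at your first step: applying the triangle inequality $\bigl|\sum_m b_m\bm l_m\bigr|\le\sum_m|b_m|$ throws away the factor $s_{min}$ (it amounts to the trivial bound $s_{min}\ge 1$). The repair is short and keeps the rest of your chain: you have already shown, with strictness,
\begin{equation*}
  \Bigl(\sum_m |b_m|\Bigr)^2 \;<\; \Bigl(\sum_m a_m\Bigr)\Bigl(\sum_m c_m\Bigr) \;=\; 2\rho^2 E ,
\end{equation*}
which is the paper's inequality \eqref{eq:3.2}. Now, when $U\neq 0$, observe that the numbers $a_m:=b_m/(\rho U)$ satisfy $\sum_m a_m\bm l_m=\bm U/U=\bm e$, so by definition of the minimum, $\sum_m|b_m|\ge \rho\,U\,s_{min}(\bm e)$, and $s_{min}(\bm e)\ge 1$ follows from the very triangle inequality you used. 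Combining the two displays gives $E>\tfrac12 U^2 s_{min}(\bm e)^2$ with the geometry-determined constant, which is the form of the lemma the later existence argument for Theorem~\ref{thm:alpha} actually requires.
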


\begin{proof}
    Set $\hat{\rho}_m=\int_{\mathbb{R}} f_m d\xi \ge 0$ and $\hat{\rho}_m \hat{u}_m = \int_{\mathbb{R}} \xi f_m d\xi$ for $m=1,\dots,N$. We deduce from $\bm \rho = \langle \bm m_{\mathcal{L}} f_{\mathcal{L}} \rangle_{\mathcal{L}}$ and the Cauchy-Schwartz inequality that
    \begin{equation}  \label{eq:relaxed_U}
        \rho = \sum_{m=1}^N \hat{\rho}_m >0, \quad
        \rho \bm U = \sum_{m=1}^N \hat{\rho}_m \hat{u}_m \bm l_m,
    \end{equation}
    \[
    2\rho E = \sum_{m=1}^N \int_{\mathbb{R}} \xi^2 f_m d\xi > \sum_{m: \ \hat \rho_m > 0} \frac{\left( \int_{\mathbb{R}} \xi f_m d\xi \right)^2}{\int_{\mathbb{R}}f_m d\xi} = \sum_{m=1}^N \hat{\rho}_m \hat{u}_m^2.
    \]
    The inequality is strict because the equality holds only when $f_m=0$ for all $m$, which contradicts $\rho>0$. Since
    \[
    \begin{aligned}
        \sum_{m=1}^N \hat{\rho}_m \hat{u}_m^2
        & = \frac{1}{\rho} \sum_{m,n} \hat{\rho}_n \hat{\rho}_m \hat{u}_m^2
        = \frac{1}{\rho} \left[ \sum_m (\hat{\rho}_m \hat{u}_m)^2 + \sum_{m < n} \hat{\rho}_m \hat{\rho}_n ( \hat{u}_m^2 + \hat{u}_n^2 ) \right] \\
        & \ge \frac{1}{\rho} \left( \sum_{m=1}^N |\hat{\rho}_m \hat{u}_m| \right)^2,
    \end{aligned}
    \]
    we have
    \begin{equation}\label{eq:3.2}
        2\rho E > \frac{1}{\rho} \left( \sum_{m=1}^N |\hat{\rho}_m \hat{u}_m| \right)^2.
    \end{equation}
    On the other hand, the second equation in Eq.~(\ref{eq:relaxed_U}) can be rewritten as
    \[
    \sum_{m=1}^N \left(\frac{\hat{\rho}_m \hat{u}_m}{\rho U} \right) \bm l_m = \bm e := \frac{\bm U}{U}.
    \]
    Here we have assumed that $U \neq 0$, otherwise the proof is already complete. Denote by $s_{min} = s_{min}(\bm e)$ the minimum of $\sum_{m=1}^N |a_m|$ subject to the constraints $\sum_{m=1}^N a_m \bm l_m = \bm e \in \mathbb{R}^D$. Obviously we have $s_{min}(\bm e) \ge 1$ because
    \[
    \sum_{m=1}^N |a_m| = \sum_{m=1}^N | a_m \bm l_m | \ge \left | \sum_{m=1}^N a_m \bm l_m \right | = | \bm e |=1.
    \]
    Hence we see that $\sum_{m=1}^N |\hat{\rho}_m \hat{u}_m| \ge \rho U s_{min}(\bm e)$ and the proof is complete.
\end{proof}

\begin{remark}
    Lemma~\ref{prop:rhoimage} indicates the capacity of DVDM to realize macroscopic flow states.
    In the Boltzmann equation, the constraint for $\bm \rho$ is $\rho>0$ and $E>\frac{1}{2}U^2$ (because the temperature $\theta = (2E - U^2)/D>0$).
    By contrast, Lemma~\ref{prop:rhoimage} gives a bigger lower bound for $E$ because $s_{min}\ge 1$, exhibiting the price we pay in the DVDM approximation. Generally speaking, more states can be realized with more discrete velocity directions (see Appendix \ref{sec:appendix_realize} for the case $D=2$). On the other hand, no upper bound is required for $E$ or $\theta$ in DVDM, which is in contrast to DVM \cite{Mieu2000}.
\end{remark}

To show the existence of a minimizer $\bm \alpha$ of $J(\bm \alpha)$, we take $\bm \alpha^* \in \mathcal{D}$ and set $B_M := \{\bm \alpha \in \mathcal{D}: J(\bm \alpha) \le J(\bm \alpha^*)\}$. Thus it suffices to prove that $B_M$ is compact, namely, $B_M$ is both closed in $\mathbb{R}^{D+2}$ and bounded.

The closedness is due to the following lemma, which is similar to Proposition 2 (P2) in Ref.~\cite{Mieu2001} but its proof needs more efforts.
\begin{lemma} \label{lem:alpha_to_0}
    If a sequence $\{\bm \alpha^{(p)}\}_{p=1}^{\infty} \subset \mathcal{D}$ satisfies
    \[
    \lim_{p\to \infty} \alpha_{D+1}^{(p)} = 0,
    \]
    then there is a subsequence $\{\bm \alpha^{(p_k)}\}_{k=1}^{\infty}$ such that $J(\bm \alpha^{(p_k)})$ goes to $+\infty$.
\end{lemma}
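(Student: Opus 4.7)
The plan is to evaluate $J(\bm\alpha^{(p)})$ in closed form and then extract a lower bound whose divergence is manifest. Setting $\tau^{(p)}:=-\alpha_{D+1}^{(p)}\to 0^{+}$, a direct Gaussian integration in $\xi$ (completing the square in $\bm\alpha\cdot\bm m_m = \alpha_0+\xi\,\hat{\bm\alpha}\cdot\bm l_m+\tfrac{1}{2}\alpha_{D+1}\xi^2$) yields
\[
J(\bm\alpha^{(p)}) \;=\; \sqrt{\tfrac{2\pi}{\tau^{(p)}}}\,e^{\alpha_0^{(p)}}\sum_{m=1}^{N}\exp\!\left(\tfrac{(\hat{\bm\alpha}^{(p)}\cdot\bm l_m)^2}{2\tau^{(p)}}\right)\;-\;\rho\,\alpha_0^{(p)}\;-\;\rho\bm U\cdot\hat{\bm\alpha}^{(p)}\;+\;\rho E\,\tau^{(p)}.
\]
The crucial elementary observation is that for any $A>0$ the one-variable function $t\mapsto A e^{t}-\rho t$ attains its minimum value $\rho(1+\ln(A/\rho))$, which is monotonically increasing in $A$. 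I would apply this with $t=\alpha_0^{(p)}$ to minimize $\alpha_0^{(p)}$ out of the above expression.

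Next I would split into two subcases by extracting a subsequence: either $\hat{\bm\alpha}^{(p)}$ is bounded, or $|\hat{\bm\alpha}^{(p)}|\to\infty$. In the bounded subcase the term $-\rho\bm U\cdot\hat{\bm\alpha}^{(p)}+\rho E\,\tau^{(p)}$ is bounded, and using $\exp(\cdot)\ge 1$ in every summand gives an effective prefactor $A\ge N\sqrt{2\pi/\tau^{(p)}}$; the resulting lower bound $\rho\ln\bigl(N\sqrt{2\pi/\tau^{(p)}}/\rho\bigr)+O(1)$ diverges to $+\infty$. In the unbounded subcase I pass to a further subsequence with $\hat{\bm\alpha}^{(p)}/|\hat{\bm\alpha}^{(p)}|\to\bm v$, $|\bm v|=1$, and exploit the full-rank assumption on $L^T$: since the $\bm l_m$ span $\mathbb{R}^D$, some $\bm l_{m^\ast}\cdot\bm v\ne 0$, so $(\hat{\bm\alpha}^{(p)}\cdot\bm l_{m^\ast})^2\ge c^{2}|\hat{\bm\alpha}^{(p)}|^{2}$ eventually for some constant $c>0$. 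Keeping only the $m^\ast$-term of the sum and minimizing out $\alpha_0^{(p)}$ produces
\[
J(\bm\alpha^{(p)})\;\ge\;\rho\cdot\tfrac{c^{2}|\hat{\bm\alpha}^{(p)}|^{2}}{2\tau^{(p)}}\;-\;\rho|\bm U|\,|\hat{\bm\alpha}^{(p)}|\;+\;O\!\left(\ln\tfrac{1}{\tau^{(p)}}\right),
\]
so the quadratic-in-$|\hat{\bm\alpha}^{(p)}|$ term, inflated by $1/\tau^{(p)}$, dominates the linear term and again forces divergence.

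The main obstacle I anticipate is the coupled freedom in all three components of $\bm\alpha^{(p)}$: a priori $\alpha_0^{(p)}$ could plunge to $-\infty$ fast enough to neutralise the Gaussian prefactor, while $\hat{\bm\alpha}^{(p)}$ pulls the exponents in yet another direction. The $\alpha_0$-minimization trick above is precisely what decouples these degrees of freedom and reduces the whole problem to showing that the $(\hat{\bm\alpha},\tau)$-dependent prefactor $A$ blows up, a property that is transparent in each of the two subcases.
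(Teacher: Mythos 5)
Your proposal is correct, and it reaches the conclusion by a route that is genuinely more self-contained than the paper's. Both arguments start from the same closed-form expression for $J$ (Eq.~(\ref{eq:J_detail})), but the paper then combines the $N$ exponentials via Jensen's inequality and the norm equivalence $\sum_m(\hat{\bm\alpha}\cdot\bm l_m)^2\ge C|\hat{\bm\alpha}|^2$ (full rank of $L$) to obtain the single lower bound Eq.~(\ref{eq:estimate_J}), in which $\alpha_0$ is still present, and it defers the verification that this bound blows up as $\alpha_{D+1}\to 0^-$ to Ref.~\cite{Mieu2001}, where that step is settled by a case analysis involving $\alpha_0$. You instead eliminate $\alpha_0$ outright through the pointwise inequality $Ae^{t}-\rho t\ge\rho\left(1+\ln(A/\rho)\right)$ (legitimate because $\rho>0$ by Lemma~\ref{prop:rhoimage}), which reduces the lemma to showing that $\rho\ln\bigl(A^{(p)}/\rho\bigr)-\rho\,\bm U\cdot\hat{\bm\alpha}^{(p)}$ diverges, and you handle that with just two subsequential cases on $|\hat{\bm\alpha}^{(p)}|$: if it stays bounded, $A^{(p)}\ge N\sqrt{2\pi/\tau^{(p)}}\to\infty$ already forces divergence; if it blows up, picking one direction $\bm l_{m^\ast}$ not orthogonal to the limit of $\hat{\bm\alpha}^{(p)}/|\hat{\bm\alpha}^{(p)}|$ (possible by the rank-$D$ assumption) gives a term of order $|\hat{\bm\alpha}^{(p)}|^2/\tau^{(p)}$ that dominates the linear term $\rho U|\hat{\bm\alpha}^{(p)}|$. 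Your approach buys a complete, short proof of this lemma without importing the external case analysis, and it does not even need Jensen or the $\|\cdot\|_{\mathcal{L}}$ norm, since keeping a single summand suffices; note, however, that it is tailored to this lemma — the intermediate estimate Eq.~(\ref{eq:estimate_J}) with $\alpha_0$ retained is reused verbatim in the paper's subsequent five-case proof that $B_M$ is bounded, where $\alpha_{D+1}$ is bounded away from zero and the unboundedness may reside in $\alpha_0$ itself, a situation in which your $\alpha_0$-minimized lower bound is too lossy to conclude.
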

\begin{proof}
    Note that $\|\bm b\|_{\mathcal{L}} = \sqrt{\sum_{m=1}^N (\bm l_m \cdot \bm b)^2}$ is a norm on $ \mathbb{R}^D$. The reason for this is that $L^T = (\bm l_1,...,\bm l_N) \in \mathbb{R}^{D\times N}$ is full-rank, $L^T L$ is positive definite, and $\sum_{m=1}^N (\bm l_m \cdot \bm b)^2 = \bm b^T L^T L \bm b$. Thus there exists a constant $C>0$ such that $\|\bm b\|_{\mathcal{L}}^2 \ge C | \bm b |^2$ for all $\bm b\in \mathbb{R}^D$. In what follows, we use $C$ as a generic constant.

    A direct computation from Eq.~(\ref{eq:J_def}) yields
    \begin{equation}\label{eq:J_detail}
        J(\bm \alpha) = \sum_{m=1}^N \sqrt{\frac{2\pi}{-\alpha_{D+1}}} \exp \left( \alpha_0 + \frac{(\hat{\bm \alpha} \cdot \bm{l}_m)^2}{2(-\alpha_{D+1})} \right) - \rho (\alpha_0 + \hat{\bm \alpha} \cdot \bm U + \alpha_{D+1} E).
    \end{equation}
    Since $e^x$ is a convex function, we have
    \[
    \sum_{m=1}^N \exp \left( \frac{(\hat{\bm \alpha} \cdot \bm{l}_m)^2}{2(-\alpha_{D+1})} \right)
    \ge
    N \exp \left( \frac{\sum_{m=1}^N(\hat{\bm \alpha} \cdot \bm{l}_m)^2}{2N(-\alpha_{D+1})} \right)
    \ge
    C \exp \left( C \frac{|\hat{\bm \alpha}|^2}{|\alpha_{D+1}|} \right)
    \]
    and therefore
    \begin{equation} \label{eq:estimate_J}
        J(\bm \alpha)
        \geq \frac{C}{\sqrt{|\alpha_{D+1}|}} \exp \left( \alpha_0 + C \frac{|\hat{\bm \alpha}|^2}{|\alpha_{D+1}|} \right) - \rho (\alpha_0 + U|\hat{\bm \alpha} |).
    \end{equation}
    With this inequality, the rest of this proof is to show the right-hand side of Eq.~(\ref{eq:estimate_J}) goes to $+\infty$ as $\alpha_0^{(p)} \to 0^-$, which can be found in Ref.~\cite{Mieu2001} and we omit it here.
\end{proof}

It remains to show that $B_M$ is bounded. Otherwise, there is an unbounded sequence $\{\bm \alpha^{(p)}\}_{p=1}^{\infty} \subset B_M$. Thus we only need to show that $J(\bm \alpha^{(p)})$ has no upper bound. To do this, we formulate the following lemma.

\begin{lemma} \label{lem:uniform_on_ball}
    Set
    \[
    \mathbb{S}^{D+1}_- := \{\bm w =(w_0,\hat{\bm w}, w_{D+1}) \in \mathbb{R}^{D+2}: |\bm w|=1, w_{D+1}<0\}.
    \]
    For any $\bm w \in \mathbb{S}^{D+1}_-$, there is an open neighborhood $ \Omega(\bm w)$ of $\bm w$ in $\mathbb{S}^{D+1}_-$ such that $J(t\bm w') \to +\infty$ uniformly on $\Omega(\bm w)$, namely,
    \[
    \lim_{t\to +\infty} \inf_{\bm w'\in \Omega(\bm w)} J(t\bm w') = +\infty.
    \]
\end{lemma}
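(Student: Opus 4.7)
The plan is to substitute $\bm \alpha = t\bm w'$ into the explicit formula \eqref{eq:J_detail} and analyze the asymptotics as $t\to+\infty$. Introducing the abbreviations
\[
c_m(\bm w) := w_0 + \frac{(\hat{\bm w}\cdot\bm l_m)^2}{2|w_{D+1}|}, \qquad \ell(\bm w) := w_0 + \hat{\bm w}\cdot\bm U - |w_{D+1}|E,
\]
formula \eqref{eq:J_detail} rewrites as
\[
J(t\bm w) = \sum_{m=1}^N \sqrt{\frac{2\pi}{t|w_{D+1}|}}\,\exp\bigl(tc_m(\bm w)\bigr) - t\rho\,\ell(\bm w).
\]
I will split the argument into two complementary cases according to the sign of $\max_m c_m(\bm w)$ and produce in each a neighborhood on which a uniform lower bound for $J(t\bm w')$ already diverges.

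In the easy case, some $c_{m^*}(\bm w) > 0$. By continuity and because $w_{D+1}<0$, a small neighborhood $\Omega(\bm w)\subset\mathbb{S}^{D+1}_-$ can be chosen on which $c_{m^*}(\bm w')\geq c_*>0$, $|w'_{D+1}|\geq \beta_*>0$ and $|\ell(\bm w')|\leq C_*$ hold uniformly. Dropping the nonnegative contributions from the other exponentials yields the uniform estimate $J(t\bm w') \geq \sqrt{2\pi/(t\beta_*)}\,e^{tc_*} - t\rho C_*$, whose right-hand side diverges to $+\infty$.

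The harder case is $c_m(\bm w)\leq 0$ for every $m$. Here all exponential terms are $\leq t^{-1/2}$ times a bounded factor, so the sign of $\ell(\bm w)$ controls the limit. The crucial claim is that $\ell(\bm w)<0$. Let $A:=\max_m(\hat{\bm w}\cdot\bm l_m)^2$, so that the assumption forces $w_0\leq -A/(2|w_{D+1}|)$. Writing $\bm U=U\bm e$ and picking a representation $\bm e=\sum_m a_m^*\bm l_m$ achieving $s_{min}=\sum_m |a_m^*|$ (cf.~Lemma~\ref{prop:rhoimage}), one has $|\hat{\bm w}\cdot\bm U|\leq U s_{min}\sqrt{A}$. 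The Young inequality $U s_{min}\sqrt{A}\leq A/(2|w_{D+1}|) + |w_{D+1}|U^2 s_{min}^2/2$ is chosen with exactly the weight that makes the two $A/(2|w_{D+1}|)$ contributions cancel, leaving
\[
\ell(\bm w) \leq |w_{D+1}|\bigl(\tfrac{1}{2}U^2 s_{min}^2 - E\bigr) < 0
\]
by virtue of the strict bound $E>\tfrac{1}{2}U^2 s_{min}^2$ from Lemma~\ref{prop:rhoimage}. Continuity then supplies a neighborhood $\Omega(\bm w)$ on which $\ell(\bm w')\leq \tfrac{1}{2}\ell(\bm w)<0$, and since the exponential terms remain nonnegative, $J(t\bm w')\geq -\tfrac{t\rho}{2}\ell(\bm w)\to+\infty$ uniformly.

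The main obstacle is the algebraic cancellation producing $\ell(\bm w)<0$ in the hard case: both the pointwise hypothesis $c_m\leq 0$ and the strict macroscopic inequality of Lemma~\ref{prop:rhoimage} are necessary, and the Young step has to be tuned precisely so that the quadratic-in-$\hat{\bm w}$ contributions cancel and only the strict-inequality margin survives. Once this inequality is established, the continuity arguments yielding uniform neighborhoods in both cases are routine, and the full statement follows.
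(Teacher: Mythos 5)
Your proof is correct and takes essentially the same route as the paper: the same explicit formula for $J(t\bm w)$, the same dichotomy on the sign of the exponents $w_0+(\hat{\bm w}\cdot\bm l_m)^2/(2|w_{D+1}|)$, and in the hard case the same use of Lemma~\ref{prop:rhoimage} through an $\ell^1$-minimal representation of $\bm U/U$ in terms of the $\bm l_m$ to conclude $\bm\rho\cdot\bm w<0$ (your Young inequality with weight $|w_{D+1}|$ is just the paper's AM--GM step rearranged, and it absorbs the paper's separate $w_0=0$ subcase). One cosmetic slip in the easy case: $|w'_{D+1}|\ge\beta_*$ gives an \emph{upper} bound on $\sqrt{2\pi/(t|w'_{D+1}|)}$, so the lower bound on the prefactor should instead come from $|w'_{D+1}|\le 1$, yielding $\sqrt{2\pi/t}\,e^{tc_*}-t\rho C_*$, which still diverges and leaves the argument intact.
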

\begin{proof}
    From Eq.~(\ref{eq:J_detail}) we have
    \begin{equation} \label{eq:Jtw}
        J(t\bm w) = \sum_{m=1}^N \sqrt{\frac{2\pi}{t |w_{D+1}|}} \exp \left(t w_0 + t \frac{(\hat{\bm w} \cdot \bm{l}_m)^2}{2|w_{D+1}|} \right) - t \bm \rho \cdot \bm w.
    \end{equation}
    If there exists $m$ such that $ (\hat{\bm w} \cdot \bm{l}_m)^2 > 2 w_0 w_{D+1} $, then $\bm w$ has a neighborhood $ \Omega(\bm w)$ in $\mathbb{S}^{D+1}_-$ such that for all $\bm w' \in \Omega(\bm w)$,
    \[
    w_0'+ \frac{(\hat{\bm w}' \cdot \bm{l}_m)^2}{2 |w_{D+1}'|} \ge d >0 \quad \text{and} \quad \bm \rho \cdot \bm w' \le C(\bm w) < + \infty.
    \]
    Then it follows from Eq.~(\ref{eq:Jtw}) that $J(t\bm w') \ge \frac{C}{\sqrt{t}} \exp(t d) - t C(\bm w) \to + \infty $ uniformly for $\bm w' \in \Omega(\bm w)$.

    Otherwise, we have $ (\hat{\bm w} \cdot \bm{l}_m)^2 \le 2 w_0 w_{D+1} $ for all $m$. If $w_0=0$, $\hat{\bm w}$ must be $\bm 0$ and we have $\bm \rho \cdot \bm w = \rho(w_0 + \bm U \cdot \hat{\bm w} + E w_{D+1})<0$.
    Otherwise, $w_0$ must be negative. Due to Lemma~\ref{prop:rhoimage}, we can find $\{a_m\}_{m=1}^N \subset \mathbb{R}$ satisfying $U\sum_{m=1}^N a_m \bm l_m = \bm U$ and $\sqrt{2E} > U \sum_{m=1}^N |a_m|$. Then we deduce that
    \[
    \begin{aligned}
        -w_0 - E w_{D+1}
        & \ge 2 \sqrt{E w_0 w_{D+1}}
        >
        U \sum_{m=1}^N |a_m| \sqrt{2  w_0 w_{D+1}}
        \\
        & \ge
        U\sum_{m=1}^N |a_m| |\hat{\bm w} \cdot \bm{l}_m|
        \ge
        U| \sum_{m=1}^N a_m \hat{\bm w} \cdot \bm{l}_m| = |\hat{\bm w} \cdot \bm U|.
    \end{aligned}
    \]
    Therefore, we also have $\bm \rho \cdot \bm w <0$. Now we can find a neighborhood $ \Omega(\bm w)$ of $\bm w$ in $\mathbb{S}^{D+1}_-$ such that $\bm \rho \cdot \bm w' \le -d < 0$ for all $\bm w' \in \Omega(\bm w)$. Thus we have $ J(t\bm w') > t d \to + \infty $ uniformly for $\bm w' \in \Omega(\bm w)$. This completes the proof.
\end{proof}

\begin{remark}
    Since $\mathbb{S}_-^{D+1}$ is not compact, this lemma cannot directly imply that $J(t \bm w)$ goes to infinity uniformly for all directions $\bm w$. In contrast, in DVM \cite{Mieu2000}, the compactness of $\mathbb{S}^{D+1}$ helps to find a finite open cover and prove the coercivity.
\end{remark}

Having Lemma~\ref{lem:uniform_on_ball}, we first suppose $\{\bm \alpha^{(p)}\}$ is contained in a cone $C_r$ defined by
\[
C_r := \left\{ \bm \alpha \in \mathcal{D} : \sqrt{\alpha_0^2 + |\hat{\bm \alpha}|^2} \le r |\alpha_{D+1}| \right\}
\]
for some $r>0$. Then $\mathbb{S}_r^{D+1}:= \mathbb{S}_-^{D+1} \cap C_r$ is a compact set. Lemma~\ref{lem:uniform_on_ball} implies that for any $\bm w \in \mathbb{S}_r^{D+1}$, there exist a neighborhood $ \Omega(\bm w)$ in $\mathbb{S}^{D+1}_r$ and $T_{\bm w}>0$, such that
\[
J(t \bm w') > J(\bm \alpha^*) + 1 \ \text{for all } t>T_{\bm w},\ \bm w'\in \Omega(\bm w).
\]
Since $\cup_{\bm w \in \mathbb{S}^{D+1}_r}\Omega(\bm w)$ is an open cover of $\mathbb{S}^{D+1}_r$, we can find a finite set $\{\bm w^{(q)}\}_{q=1}^Q \subset \mathbb{S}^{D+1}_r$ such that $\mathbb{S}^{D+1}_r \subset \cup_{q=1}^Q \Omega(\bm w^{(q)})$. We take $T = \max_{1\le q\le Q} T_{\bm w^{q}}$ and get $ J(t \bm w) > J(\bm \alpha^*) + 1$ for all $t>T$ and $\bm w\in \mathbb{S}^{D+1}_r$. Therefore, $\{\bm \alpha^{(p)}\}$ is bounded in $C_r \cap \{ |\bm \alpha| \le T \}$, which is a contradiction.

The last argument indicates that $\sqrt{|\alpha_0^{(p)}|^2 + |\hat{\bm \alpha}^{(p)}|^2} / |\alpha_{D+1}^{(p)}| \to \infty$, which implies that either $|\alpha_0^{(p)}| / |\alpha_{D+1}^{(p)}|$ or $|\hat{\bm \alpha}^{(p)}| / |\alpha_{D+1}^{(p)}|$ is unbounded. Note that we can assume $|\alpha_{D+1}| \ge \delta >0$ due to Lemma~\ref{lem:alpha_to_0}.

To show that $J(\bm \alpha^{(p)})$ is unbounded, we recall Eq.~(\ref{eq:estimate_J}):
\[
J(\bm \alpha) \geq \frac{C}{\sqrt{|\alpha_{D+1}|}} \exp \left( \alpha_0 + C \frac{|\hat{\bm \alpha}|^2}{|\alpha_{D+1}|} \right) - \rho (\alpha_0 +  U |\hat{\bm \alpha}|).
\]
Then we follow Ref.~\cite{Mieu2001} and divide the argument into five cases, where we use $M^{(p)}$ to denote terms that go to $+\infty$ as $p\to \infty$.
\begin{enumerate}
    \item $\{\alpha_0^{(p)}\}$ is bounded. Then $|\alpha_0^{(p)}| / |\alpha_{D+1}^{(p)}|$ is bounded and we have
    \[
    \frac{|\hat{\bm \alpha}^{(p)}|}{|\alpha_{D+1}^{(p)}|} \to \infty \quad \text{and} \quad |\hat{\bm \alpha}^{(p)}| \ge \frac{\delta|\hat{\bm \alpha}^{(p)}|}{|\alpha_{D+1}^{(p)}|} \to \infty.
    \]
    Thus it follows that
    \[
    \begin{aligned}
        J(\bm \alpha^{(p)})
        & \ge \frac{C}{\sqrt{|\alpha_{D+1}^{(p)}|}} \exp \left( C \frac{|\hat{\bm \alpha}^{(p)}|^2}{|\alpha_{D+1}^{(p)}|} \right) - C |\hat{\bm \alpha}^{(p)}| - C \\
        & \ge C \frac{|\hat{\bm \alpha}^{(p)}|^4}{|\alpha_{D+1}^{(p)}|^{\frac{5}{2}}} -C |\hat{\bm \alpha}^{(p)}| - C \to + \infty.
    \end{aligned}
    \]
    \item $\alpha_0^{(p)} \to + \infty$ and $|\alpha_0^{(p)}| / |\alpha_{D+1}^{(p)}| \to + \infty$. In this case, we have
    \[
    \begin{aligned}
        J(\bm \alpha^{(p)})
        & \ge \frac{C}{\sqrt{|\alpha_{D+1}^{(p)}|}} \left( 1 + \frac{|\hat{\bm \alpha}^{(p)}|^2}{|\alpha_{D+1}^{(p)}|} \right)
        \exp( \alpha_0^{(p)} )- C \alpha_0^{(p)} - C |\hat{\bm \alpha}^{(p)}| \\
        & \ge M^{(p)} \alpha_0^{(p)} + M^{(p)} |\hat{\bm \alpha}^{(p)}|^2 - C \alpha_0^{(p)} - C |\hat{\bm \alpha}^{(p)}| \to + \infty.
    \end{aligned}
    \]
    \item $\alpha_0^{(p)} \to + \infty$ and $|\alpha_0^{(p)}| / |\alpha_{D+1}^{(p)}|$ is bounded. As in the first case, we have
    \[
    \frac{|\hat{\bm \alpha}^{(p)}|}{|\alpha_{D+1}^{(p)}|} \to \infty \quad \text{and} \quad |\hat{\bm \alpha}^{(p)}| \ge \frac{\delta|\hat{\bm \alpha}^{(p)}|}{|\alpha_{D+1}^{(p)}|} \to \infty.
    \]
    Thus it follows that
    \[
    \begin{aligned}
        J(\bm \alpha^{(p)})
        & \ge \frac{C}{\sqrt{|\alpha_{D+1}^{(p)}|}} (1 + \alpha_0^{(p)}) \exp \left(C \frac{|\hat{\bm \alpha}^{(p)}|^2}{|\alpha_{D+1}^{(p)}|} \right)
        - C \alpha_0^{(p)} - C |\hat{\bm \alpha}^{(p)}| \\
        & \ge M^{(p)} |\hat{\bm \alpha}^{(p)}| + M^{(p)} \alpha_0^{(p)} - C \alpha_0^{(p)} - C |\hat{\bm \alpha}^{(p)}| \to + \infty.
    \end{aligned}
    \]
    \item $\alpha_0^{(p)} \to - \infty$ and $\alpha_0^{(p)} + U |\hat{\bm \alpha}^{(p)}| \to - \infty$. It is easy to see that $J(\bm \alpha^{(p)}) \to + \infty$ in this case.
    \item $\alpha_0^{(p)} \to - \infty$ and $\alpha_0^{(p)} + U |\hat{\bm \alpha}^{(p)}| \ge -M_L$. Then we have $|\hat{\bm \alpha}^{(p)}| \to + \infty$. Furthermore, we claim $|\hat{\bm \alpha}^{(p)}|/|\alpha_{D+1}^{(p)}| \to \infty$. Otherwise, $|\alpha_0^{(p)}|/|\alpha_{D+1}^{(p)}|$ is unbounded. We can assume $M_L > 0$ and get
    \[
    \frac{\alpha_0^{(p)}}{|\alpha_{D+1}^{(p)}|} + \frac{\hat{\bm \alpha}^{(p)} \cdot \bm U}{|\alpha_{D+1}^{(p)}|} \ge \frac{-M_L}{|\alpha_{D+1}^{(p)}|} \ge \frac{-M_L}{\delta} > - \infty.
    \]
    Thus $|\alpha_0^{(p)}|/|\alpha_{D+1}^{(p)}| \to \infty$ yields $|\hat{\bm \alpha}^{(p)}|/|\alpha_{D+1}^{(p)}| \to \infty$. Now we have
    \[
    \begin{aligned}
        J(\bm \alpha^{(p)})
        & \ge \frac{C}{\sqrt{|\alpha_{D+1}^{(p)}|}} \exp \left( C \frac{|\hat{\bm \alpha}^{(p)}|^2}{|\alpha_{D+1}^{(p)}|} - C |\hat{\bm \alpha}^{(p)}| - M_L \right) - C |\hat{\bm \alpha}^{(p)}| - C \\
        & \ge M^{(p)} \exp \left( \frac{C}{2} \frac{|\hat{\bm \alpha}^{(p)}|^2}{|\alpha_{D+1}^{(p)}|} - C |\hat{\bm \alpha}^{(p)}| \right) - C |\hat{\bm \alpha}^{(p)}| - C \to + \infty.
    \end{aligned}
    \]
\end{enumerate}
Consequently, we have shown that $J(\bm \alpha^{(p)})$ has no upper bound. Hence $B_M$ is bounded and the proof is completed.

We close this section with the following remark.
\begin{remark}
    Recall Eq.~(\ref{eq:stand_rel}). At the discrete equilibrium $\mathcal{E_L}$, we have $\sigma^2 < D \theta$. To see this, we deduce from $\rho \bm U = \sum_{m=1}^N \rho_m u_m \bm l_m$ that
    \[
    \rho U^2 = \frac{1}{\rho} \left( \rho \bm U \cdot \rho \bm U \right) = \frac{1}{\rho} \sum_{m,n}\rho_m \rho_n \left( u_m \bm l_m \cdot u_n \bm l_n \right).
    \]
    Then we have
    \[
    \begin{aligned}
        \sum_m \rho_m u_m^2 - \rho U^2 &= \frac{1}{2\rho} \sum_{m,n} \rho_m \rho_n \left( u_m^2 + u_n^2 - 2 u_m \bm l_m \cdot u_n \bm l_n \right) \\
        &= \frac{1}{2\rho} \sum_{m,n} \rho_m \rho_n \| u_m \bm l_m - u_n \bm l_n \|^2 >0.
    \end{aligned}
    \]
    Thus, the conclusion follows from the relation $\rho U^2 + D \rho \theta = \sum_m \rho_m u_m^2 + \rho \sigma^2$.
\end{remark}

\section{Spatial-time models}
\label{Sec:solve_dvdm}

Besides the spatial-time variables $\bm x$ and $t$, $\xi \in \mathbb{R}$ is another continuous variable in the DVDM model Eq.~(\ref{eq:dvdm_bgk_full}). In this section, we derive models only with $\bm x$ and $t$ as continuous variables by treating $\xi$ in two ways.

\subsection{Discretizing $\xi$}
\label{subsec:dvddvm}
In this first way, the velocity variable $\xi$ in Eq.~(\ref{eq:dvdm_bgk_full}) is replaced with a set of fixed nodes $\xi_{mk} = k \Delta \xi + \xi_0$ for $k=1,\dots,N_m'$ and $\Delta \xi, \xi_0 \in \mathbb{R}$. Namely, each $f_m(t,\bm x,\xi)$ is represented by an $N_m'$-vector $\left( f_{mk}(t,\bm x) \right)_{k=1}^{N_m'}$.
The resultant governing equation for each $f_{mk}=f_{mk}(t,\bm x)$ reads as
\begin{equation} \label{eq:dvddvm}
    \partial_t f_{mk} + \xi_{mk} \bm l_m \cdot \nabla_{\bm x} f_{mk} = \frac{1}{\tau} \left( \mathcal{E}_{mk} - f_{mk} \right)
\end{equation}
for $1\le k\le N_m'$ and $1\le m\le N$. Note that $\Delta \xi$ may depend on the direction $m$ and the index $k$. For the following reason, we denote this kind of models as DVD-DVM.

The DVD-DVM in Eq.~(\ref{eq:dvddvm}) is similar to DVM \cite{Mieu2000} but allows a new way in selecting the discrete velocities. Indeed, common DVM practices use discrete-velocity nodes in a uniform cubic lattice of $\mathbb{R}^D$ \cite{gat1975,Mieu2000}. By contrast, the proposed DVD-DVM creates discrete velocities radially distributed in $\mathbb{R}^D$ and centered at the origin of the velocity space.

To determine the equilibrium $\mathcal{E}_{mk}$ in Eq.~(\ref{eq:dvddvm}), we define the macroscopic fluid quantity $\bm \rho = (\rho, \rho \bm U, \rho E)^T$ as
\begin{equation} \label{eq:dvdm_dvd_macdef}
    \rho = \sum_{m,k}f_{mk} \Delta \xi, \quad
    \rho \bm U = \sum_{m,k} \xi_{mk} f_{mk} \bm l_m \Delta \xi, \quad
    \rho E = \sum_{m,k} \frac{1}{2} \xi_{mk}^2 f_{mk} \Delta \xi.
\end{equation}
For a given $\bm \rho$, we follow the minimum entropy principle and determine the discrete equilibrium $\left\{ \mathcal{E}_{mk} \ge 0 \right\}_{1\le m \le N, 1\le k\le N_m'}$ so that it minimizes the discrete entropy
\begin{equation} \label{eq:dvddvm_discH}
    H[\{g_{mk}\}] = \sum_{m,k} (g_{mk} \ln g_{mk} - g_{mk}) \Delta \xi
\end{equation}
among all possible $\{g_{mk} \ge 0\}_{1\le m \le N, 1\le k\le N_m'}$ satisfying Eq.~(\ref{eq:dvdm_dvd_macdef}). For the existence of such a minimizer, we have the following analogue of Theorem~\ref{thm:alpha}, where $\bm m_{mk} = \left( 1, \xi_{mk} \bm l_m, \frac{1}{2}\xi_{mk}^2 \right)^T \in \mathbb{R}^{D+2}$.
\begin{theorem} \label{thm:dvddvm_equil}
    If $\bm \rho \in \mathbb{R}^{D+2}$ is strictly realizable, i.e., there exists a set $\{g_{mk} > 0\}_{1\le m \le N, 1\le k\le N_m'}$ satisfying Eq.~(\ref{eq:dvdm_dvd_macdef}), then the minimization problem above has a unique solution $\{ \mathcal{E}_{mk} \}$ which can be expressed as
    \begin{equation} \label{eq:dvddvm_equil}
        \mathcal{E}_{mk} = \exp (\bm \alpha' \cdot \bm m_{mk})
    \end{equation}
    with a vector $\bm \alpha' \in \mathbb{R}^{D+2}$. Moreover, $\bm \alpha'$ minimizes the function
    \begin{equation} \label{eq:dvddvm_J}
        J(\bm \alpha) := \sum_{m,k} \exp(\bm \alpha \cdot \bm m_{mk}) \Delta \xi - \bm \rho \cdot \bm \alpha.
    \end{equation}
\end{theorem}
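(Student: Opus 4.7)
The plan is to mirror and simplify the two-step strategy used for the semi-continuous case (Proposition~\ref{prop1} together with Theorem~\ref{thm:alpha}): establish strict convexity and coercivity of $J(\bm \alpha)$ on the whole of $\mathbb{R}^{D+2}$, deduce existence and uniqueness of a minimizer $\bm \alpha'$, and then recover the entropy minimizer through the Gibbs-type inequality that already appeared there. The fully discrete setting is genuinely simpler than Section~\ref{sec:dis_equil}: integrals become finite sums, the constraint $\alpha_{D+1}<0$ disappears, and the unit sphere in $\mathbb{R}^{D+2}$ is now compact.

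First I would compute
\[
\nabla^2 J(\bm \alpha) = \sum_{m,k} \bm m_{mk}\otimes \bm m_{mk}\,\exp(\bm \alpha\cdot \bm m_{mk})\,\Delta \xi
\]
and argue, as in Proposition~\ref{prop1}, that the associated quadratic form vanishes only when $\bm m_{mk}\cdot\bm \eta=0$ for every $(m,k)$. Reading this as a quadratic polynomial in $\xi_{mk}$ that vanishes at each node, and invoking the full rank of $L^T$ together with a mild non-degeneracy of the nodes (for instance at least three distinct values of $\xi_{mk}$ along some direction), one concludes $\bm \eta=\bm 0$, so the Hessian is positive definite and $J$ is strictly convex on all of $\mathbb{R}^{D+2}$.

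Next I would show coercivity. Writing $\bm \alpha = t\bm w$ with $|\bm w|=1$, there are two cases. If some $\bm w\cdot\bm m_{mk}>0$, then that single exponential in $J(t\bm w)$ overwhelms the linear term $-t\bm \rho\cdot\bm w$ as $t\to +\infty$. Otherwise $\bm w\cdot\bm m_{mk}\le 0$ for all $(m,k)$, and by the spanning property at least one of these inequalities is strict. Here the strict realizability hypothesis enters decisively: choosing $g_{mk}>0$ that realizes $\bm \rho$ gives
\[
\bm \rho\cdot\bm w = \sum_{m,k} g_{mk}\,(\bm w\cdot\bm m_{mk})\,\Delta \xi < 0,
\]
so the linear term $-t\bm \rho\cdot\bm w$ tends to $+\infty$ while each exponential stays bounded by $1$. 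Continuity in $\bm w$ plus compactness of $\mathbb{S}^{D+1}\subset \mathbb{R}^{D+2}$ upgrades these pointwise divergences to a uniform bound along rays, so the sublevel set $\{J\le J(\bm 0)\}$ is bounded and $J$ attains a unique minimizer $\bm \alpha'$ by strict convexity.

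Finally, the first-order condition $\nabla J(\bm \alpha')=\bm 0$ reads exactly $\sum_{m,k}\bm m_{mk}\exp(\bm \alpha'\cdot \bm m_{mk})\,\Delta \xi = \bm \rho$, so $\mathcal{E}_{mk}:=\exp(\bm \alpha'\cdot \bm m_{mk})$ is admissible for the entropy problem~\eqref{eq:dvddvm_discH}. Applying strict convexity of $x\ln x - x$ node-by-node, exactly as in the proof of Theorem~\ref{thm:alpha}, yields
\[
H[\{g_{mk}\}] \ge H[\{\mathcal{E}_{mk}\}] + \bm \alpha'\cdot\sum_{m,k}\bm m_{mk}(g_{mk}-\mathcal{E}_{mk})\,\Delta \xi = H[\{\mathcal{E}_{mk}\}]
\]
for every admissible $\{g_{mk}\}$, with equality if and only if $g_{mk}=\mathcal{E}_{mk}$. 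I expect the main subtlety to be the spanning condition on $\{\bm m_{mk}\}$---without a mild hypothesis on the nodes, strict convexity of $J$ can genuinely fail---and the matching strict inequality needed to exclude degenerate directions $\bm w$ in the coercivity step; both are handled by the same non-degeneracy assumption. Once this is settled, the remaining ingredients are direct finite-dimensional analogues of what was already developed in Sections~\ref{Sec:dvdm} and~\ref{sec:dis_equil}.
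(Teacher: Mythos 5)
Your proof is correct and follows essentially the route the paper intends: the paper gives no details for this theorem, deferring to the conventional DVM argument of Mieussens~\cite{Mieu2000}, which is exactly your scheme of strict convexity plus coercivity of $J$ over the now-compact sphere (with strict realizability forcing $\bm \rho \cdot \bm w < 0$ in the degenerate directions), followed by the same Gibbs-type inequality used for Theorem~\ref{thm:alpha}. One small caution on your non-degeneracy remark: three distinct nodes along a single direction only yield $\eta_0=\eta_{D+1}=0$; to conclude $\bm l_m \cdot \hat{\bm \eta}=0$ in every direction (and then invoke the full rank of $L^T$) you also need a nonzero node in each direction, which is automatic for the paper's nodes $\xi_{mk}=k\Delta\xi+\xi_0$.
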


This theorem can be proved in a similar fashion as the conventional DVM with the minimum entropy principle \cite{Mieu2000}. It provides a simple way to compute $\{ \mathcal{E}_{mk} \} \in \mathbb{R}^{\mathcal{N}'}$ with $\mathcal{N}'=\sum_{m=1}^N N_m'$ by solving the $(D+2)$-dimensional minimization problem Eq.~(\ref{eq:dvddvm_J}) without constraints.

Besides, our DVDM in Eq.~(\ref{eq:dvdm_bgk_full}) offers another possibility to determine $\{ \mathcal{E}_{mk} \}$ by conforming `more loosely' to the minimum entropy principle. Recall Eq.~(\ref{eq:dvdm_gau}) that the DVDM equilibrium in each direction $\bm l_m$ is a Gaussian distribution parameterized with $(\rho_m, u_m, \sigma^2)$.
We may choose $\{ \mathcal{E}_{mk} \}$ so that for each $m$, $\{\mathcal{E}_{mk}\}_{1\le k \le N_m'}$ minimizes the discrete entropy
\begin{equation}	\label{eq:dvddvm_discH_dir}
    H_m[\{ g_{mk} \}] = \sum_{k=1}^{N_m'} (g_{mk} \ln g_{mk} - g_{mk}) \Delta \xi
\end{equation}
among all possible $\{ g_{mk} \ge 0\}_{1\le k \le N_m'}$ satisfying
$$
\rho_m = \sum_{k=1}^{N_m'} g_{mk} \Delta \xi, \quad
\rho_m u_m = \sum_{k=1}^{N_m'} \xi_{mk} g_{mk} \Delta \xi, \quad
\rho_m (u_m^2 + \sigma^2) = \sum_{k=1}^{N_m'} \xi_{mk}^2 g_{mk} \Delta \xi.
$$

For these $N$ minization problems, the analogue of Theorem~\ref{thm:dvddvm_equil} holds. Namely, each of them has a unique solution $\{ \mathcal{E}_{mk} \}_{1\le k\le N_m'}$ which can be expressed as
\begin{equation} \label{eq:dvddvm_eq_dir}
    \mathcal{E}_{mk} = \exp \left( \bm \alpha'_m \cdot \bm m'_{mk} \right)
\end{equation}
with a vector $\bm \alpha_m' \in \mathbb{R}^3$ and $\bm m'_{mk} = \left( 1,\xi_{mk}, \frac{1}{2}\xi_{mk}^2 \right)^T$. Moreover, $\bm \alpha_m'$ minimizes the function
$$
J_m(\bm \alpha) := \sum_{k=1}^{N_m'} \exp \left( \bm \alpha \cdot \bm m'_{mk} \right) \Delta \xi - \bm \rho_m' \cdot \bm \alpha
$$
with $\bm \rho_m' = \rho_m \left( 1, u_m, \frac{u_m^2 + \sigma^2}{2} \right)^T \in \mathbb{R}^3$. We note that this approach
does not necessarily find a minimum of the `total' discrete entropy Eq.~(\ref{eq:dvddvm_discH}).

In summary, we have derived two kinds of DVD-DVM models Eq.~(\ref{eq:dvddvm}):
\begin{enumerate}
    \item (DVD-DVM-I). $\mathcal{E}_{mk}$ is given with Eq.~(\ref{eq:dvddvm_equil}) by minimizing the discrete entropy Eq.~(\ref{eq:dvddvm_discH}).
    \item (DVD-DVM-II). $\mathcal{E}_{mk}$ is given with Eq.~(\ref{eq:dvddvm_eq_dir}) by minimizing the discrete entropy Eq.~(\ref{eq:dvddvm_discH_dir}) in each direction. This approach requires $(\rho_m,u_m,\sigma^2)_{m=1}^N$ defined in Eqs.~(\ref{eq:dvdm_gau} \& \ref{eq:stand_rel}) to be computed by DVDM as a first step.
\end{enumerate}

\subsection{Gaussian-EQMOM}
\label{subsec:dvd_eqmom}
Besides the DVD-DVM, the method of moment can be applied to the DVDM Eq.~(\ref{eq:dvdm_bgk_full}). For this purpose, we define the $k$th velocity moment of $f_m(t,\bm x,\xi)$ as
\begin{equation} \label{eq:def_mom}
    M_{m,k}=M_{m,k}(t,\bm x) = \int_{\mathbb{R}} \xi^k f_m(t,\bm x,\xi) d\xi
\end{equation}
for $k\in \mathbb{N}$.
The evolution of $M_{m,k}$ can be derived from Eq.~(\ref{eq:dvdm_bgk_full}) as
\begin{equation} \label{eq:dvdm_mom}
    \partial_t M_{m,k} + \bm{l}_m \cdot \nabla_{\bm x} M_{m,k+1} = \frac{1}{\tau} \left(\rho_m \Delta_k(u_m,\sigma^2) - M_{m,k} \right),
\end{equation}
where $\Delta_k(u,\sigma^2)$ denotes the $k$th moment of the normalized Gaussian function centered at $u$ with a variance $\sigma^2$.
Meanwhile, the macroscopic quantity $\bm \rho$ in Eq.~(\ref{eq:macpara_dvdm}) is computed with
\begin{equation} \label{eq:dvdm_mom_macdef}
    \rho = \sum_{m=1}^N M_{m,0}, \quad
    \rho \bm{U} = \sum_{m=1}^N \bm{l}_m M_{m,1}, \quad
    \rho E = \sum_{m=1}^N \frac{M_{m,2}}{2}.
\end{equation}

There are infinitely many equations in Eq.~(\ref{eq:dvdm_mom}). For each $m$, the first $N'$ equations for moments $M_{m,0}, \dots, M_{m,N'-1}$ are not closed because the $M_{m,N'-1}$-equation contains $M_{m,N'}$ in the convection term.
Hence a closure method is needed.

The Gaussian-EQMOM assumes that the 1-D distribution $f_m(\xi)$ is a sum of $N_m'$ Gaussian functions \cite{MF2013}:
\begin{equation} \label{eq:eqmom_assump}
    f_m(\xi) = \sum_{\alpha=1}^{N_m'} \frac{w_{m,\alpha}}{\sqrt{2\pi \vartheta_m}} \exp \left( -\frac{(\xi-v_{m,\alpha})^2}{2 \vartheta_m} \right).
\end{equation}
Note that the variance $\vartheta_m>0$ is shared among $N_m'$ kernels (and is thus not dependent on $\alpha$).
The weights $w_{m,\alpha}$, nodes $v_{m,\alpha}$ ($\alpha=1,\dots,N_m'$) and variance $\vartheta_m$ are unknown parameters to be solved from the ($2N_m'+1$) nonlinear equations
\begin{equation} \label{eq:eqmom_inv}
    M_{m,k} = \sum_{\alpha=1}^{N_m'} w_{m,\alpha} \Delta_k (v_{m,\alpha},\vartheta_m) \quad\text{for } k=0,\dots,2N_m'.
\end{equation}
An algorithm to solve these equations can be found in the literature \cite{Chalons2017,MF2013}.
Thus, any unknown moments and/or integrals associated with the distribution can be computed by Eq.~(\ref{eq:eqmom_assump}).
For example, the convection term $M_{m,2N_m'+1}$ is reconstructed as
\begin{equation} \label{eq:eqmom_reconstruct}
    \bar{M}_{m,2N_m'+1} = \sum_{\alpha=1}^{N_m'} w_{m,\alpha} \Delta_{2N_m'+1} (v_{m,\alpha},\vartheta_m).
\end{equation}
Consequently, we get
$$
\partial_t M_m + A_m(M_m) \bm l_m \cdot \nabla_{\bm x} M_m = \frac{1}{\tau} (M_{\mathcal{E} m} - M_m),
$$
where
\[
\begin{aligned}
    M_m&=(M_{m,0}, \dots, M_{m,2N_m'})^T \in \mathbb{R}^{2N'_m+1}, \\
    M_{\mathcal{E} m}&=\rho_m \left( \Delta_0(u_m, \sigma^2), \dots, \Delta_{2N_m'}(u_m,\sigma^2) \right)^T \in \mathbb{R}^{2N'_m+1},
\end{aligned}
\]
and
\begin{equation} \label{eq:A_M}
    A_m (M_m) =
    \begin{bmatrix}
        0 & 1 & & & \\
        & 0 & 1 & & \\
        & & \ddots & \ddots & \\
        & & & 0 & 1 \\
        a_{m,0} & a_{m,1} & \cdots & a_{m,2N'_m-1} & a_{m,2N'_m}
    \end{bmatrix}
    \in \mathbb{R}^{(2N_m'+1) \times (2N_m'+1)}
\end{equation}
with $a_{m,k}=\diffp{\bar{M}_{m,2N_m'+1}}{M_{m,k}}$ for $k=0,\dots,2N_m'$.

Putting the last equations together, we derive the following spatial-time model
\begin{equation} \label{eq:dvdm_eqmom}
    \partial_t \mathcal{M} + \sum_{i=1}^{D_x} A^{(i)} \partial_{x_i} \mathcal{M} = \frac{1}{\tau} (M_{\mathcal{E}} - \mathcal{M}).
\end{equation}
Here
\[
\mathcal{M} = \left(M_1^T, \dots, M_N^T \right)^T \in \mathbb{R}^{\mathcal{N}'}, \quad
M_{\mathcal{E}} = \left( M_{\mathcal{E} 1}^T, \dots, M_{\mathcal{E} N}^T \right)^T \in \mathbb{R}^{\mathcal{N}'}
\]
with $\mathcal{N}' = \sum_{m=1}^N (2N_m'+1)$
and
$$
A^{(i)} = \text{diag} \left \{ (\bm l_1 \cdot \bm e_i) A_1(M_1), \dots, (\bm l_N \cdot \bm e_i) A_N(M_N) \right \} \in \mathbb{R}^{\mathcal{N}' \times \mathcal{N}'},
$$
where $\bm e_i$ is the $i$-th column of the identity matrix of order $D_x$.

\begin{theorem}
    The moment system (\ref{eq:dvdm_eqmom}) is hyperbolic.
\end{theorem}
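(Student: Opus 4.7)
The plan is to exploit the block-diagonal structure of the flux matrices $A^{(i)}$ and reduce the claim to the already-known hyperbolicity of the one-dimensional Gaussian-EQMOM closure. I would fix an arbitrary direction $\bm n = (n_1,\dots,n_{D_x})^T \in \mathbb{R}^{D_x}$ and form the symbol $\mathcal{A}(\bm n) := \sum_{i=1}^{D_x} n_i A^{(i)}$. Because every $A^{(i)}$ is block-diagonal with the same partition into $N$ blocks of sizes $2N_m'+1$, the linear combination inherits the structure
\[
\mathcal{A}(\bm n) = \operatorname{diag}\bigl\{(\bm l_1 \cdot \bm n)\, A_1(M_1),\, \dots,\, (\bm l_N \cdot \bm n)\, A_N(M_N)\bigr\}.
\]
Consequently the spectrum and eigenvectors of $\mathcal{A}(\bm n)$ decouple into the union of those of its blocks, and it suffices to show that every block $(\bm l_m \cdot \bm n)\, A_m(M_m)$ is real-diagonalizable.

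For any fixed $m$, if $\bm l_m \cdot \bm n = 0$ the block is zero and there is nothing to check. Otherwise $(\bm l_m \cdot \bm n)\, A_m(M_m)$ shares its eigenvectors with $A_m(M_m)$ and has eigenvalues equal to $\bm l_m \cdot \bm n$ times those of $A_m(M_m)$. Therefore the whole statement reduces to the real-diagonalizability of the $(2N_m'+1)\times(2N_m'+1)$ Jacobian $A_m(M_m)$ of the one-dimensional Gaussian-EQMOM flux $(M_{m,1},\dots,M_{m,2N_m'},\bar M_{m,2N_m'+1})^T$ with respect to $M_m$. This is exactly the known 1D Gaussian-EQMOM hyperbolicity result established in Refs.~\cite{Chalons2017,HLY2020}: the eigenvalues of $A_m(M_m)$ are real (expressible in closed form via the quadrature nodes $v_{m,\alpha}$ and the common variance $\vartheta_m$) and admit a complete set of eigenvectors; the structural stability condition even holds \cite{HLY2020,Yong1999}.

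Reassembling the block-wise eigenstructure then produces, for every $\bm n \in \mathbb{R}^{D_x}$, a real diagonalization of $\mathcal{A}(\bm n)$, which is the definition of hyperbolicity for the first-order quasilinear system (\ref{eq:dvdm_eqmom}). The main obstacle is conceptually the 1D hyperbolicity of $A_m(M_m)$, but this has already been carried out in the literature cited; the multidimensional extension is essentially a one-line observation about block-diagonal matrices, so no new technical difficulty arises. I would therefore present the proof as a short two-step argument: (i) the block-diagonal reduction displayed above, and (ii) the invocation of \cite{Chalons2017,HLY2020} for each 1D block.
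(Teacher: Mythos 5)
Your proposal is correct and follows essentially the same route as the paper: the paper also reduces the claim to the block-diagonal structure of $\sum_i y_i A^{(i)}$ and invokes the known result (Ref.~\cite{HLY2020}) that each 1-D Gaussian-EQMOM Jacobian $A_m(M_m)$ is strictly hyperbolic, hence each scaled block and thus the whole symbol is real diagonalizable. Your explicit handling of the degenerate case $\bm l_m \cdot \bm n = 0$ is a harmless extra detail the paper leaves implicit.
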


\begin{proof}
    According to Ref.~\cite{HLY2020}, each matrix $A_m(M_m)$ defined in Eq.(\ref{eq:A_M}) is strictly hyperbolic. Namely, it has $(2N_m'+1)$ distinct real eigenvalues.
    Therefore, for any $\bm y = (y_1,...,y_{D_x})^T \in \mathbb{R}^{D_x}$ the block-diagonal matrix $\sum_i y_i A^{(i)}$ is real diagonalizable.
\end{proof}

\begin{remark}
    This approach, denoted as DVD-EQMOM, leads to a convenient multidimensional version of quadrature-based method of moments, which seems better understood than those in Refs.~\cite{Chalons2017,MF2013}.
\end{remark}

\section{Numerical schemes}
\label{sec:numer}
In this section, we show that the DVDM provides new numerical solvers to simulate rarefied flows. The solvers consist of two parts: spatial-time discretization and computation of equilibrium. As the first step of this project, we present the solvers only for spatially one-dimensional models, while the velocity is still in $\mathbb{R}^D$.

\subsection{Spatial-time discretization}
\label{subsec:sptm}
In this part, we discretize the spatial-time models constructed in the previous section. To do this, we denote by $g_{i}^n$ an approximation of function $g(t,x)$ over the grid-block $]x_{i-\frac{1}{2}},x_{i+\frac{1}{2}}[ \times [t_n, t_{n+1})$
with $x_i = i \Delta x$, $t_n = n \Delta t$, $i = 0, \pm 1 , \pm 2,\cdots$, and $n = 0, 1, 2,\cdots$.

\subsubsection{DVD-DVM}
For this kind of models, the governing Eq.~(\ref{eq:dvddvm}) is approximated by the following implicit-explicit scheme
\begin{equation} \label{eq:sch_dvddvm}
    \begin{aligned}
        f_{mk,i}^{n+1} = f_{mk,i}^n - \frac{\Delta t}{\Delta x} \left( \mathcal{F}^n_{mk,i+\frac{1}{2}} - \mathcal{F}^n_{mk,i-\frac{1}{2}} \right)
        + \frac{\Delta t}{\tau} \left( \mathcal{E}^n_{mk,i} - f^{n+1}_{mk,i} \right).
    \end{aligned}
\end{equation}
The numerical flux is taken as
\begin{equation} \label{eq:fluxdvddvm}
    \begin{aligned}
        \mathcal{F}^n_{mk,i+\frac{1}{2}} = \frac{1}{2} \Big(
        & \xi_{mk} \bm l_m \cdot \bm e_1 \left( f^n_{mk,i+1}+f^n_{mk,i} \right) \\
        &- |\xi_{mk} \bm l_m \cdot \bm e_1| \left( f^n_{mk,i+1} - f^n_{mk,i} - \Phi^n_{mk,i+\frac{1}{2}} \right) \Big).
    \end{aligned}
\end{equation}
The flux limiter $\Phi^n_{mk,i+\frac{1}{2}}$ allows to obtain a second-order scheme, and $\Phi^n_{mk,i+\frac{1}{2}}=0$ reduces to a first-order scheme. The equilibrium state $\mathcal{E}^n_{mk,i}$ is solved from the local fluid quantity $\bm \rho^n_{i}$ by either Eq.~(\ref{eq:dvddvm_equil}) (DVD-DVM-I) or Eq.~(\ref{eq:dvddvm_eq_dir}) (DVD-DVM-II).

Note that the source in Eq.~(\ref{eq:sch_dvddvm}) is semi-implicit because $f^{n+1}_{mk,i}$ is used to make the scheme stable, but $f_{mk,i}^{n+1}$ can be explicitly obtained.

\subsubsection{DVD-EQMOM}
For the momemt system, the governing Eq.~(\ref{eq:dvdm_mom}) is approximated by the implicit-explicit scheme:
\begin{equation} \label{eq:sch_dvdeqmom}
    \begin{aligned}
        M_{m,k,i}^{n+1} = & M_{m,k,i}^n - \frac{\Delta t}{\Delta x} \bm l_m \cdot \bm e_1 \left( \mathcal{F}^n_{m,k+1,i+\frac{1}{2}} - \mathcal{F}^n_{m,k+1,i-\frac{1}{2}} \right) \\
        &
        + \frac{\Delta t}{\tau} \left( M^n_{\mathcal{E} m,k,i} - M_{m,k,i}^{n+1} \right).
    \end{aligned}
\end{equation}
The flux is modeled by a `kinetic-based' definition \cite{Chalons2017,MF2013}. For example,
\begin{equation}
    \begin{aligned}
        \mathcal{F}^n_{m,k+1,i+\frac{1}{2}} = \frac{1}{2} \int_{\mathbb{R}} \xi^k \Big[
        & \xi \left(f^n_{m,i+1}+f^n_{m,i} \right) \\
        & - \text{sgn}(\bm l_m \cdot \bm e_1) |\xi| \left(f^n_{m,i+1}-f^n_{m,i}-\Phi^n_{m,i+\frac{1}{2}} \right) \Big] d\xi.
    \end{aligned}
\end{equation}
The limiter $\Phi^n_{m,i+\frac{1}{2}}$ allows to obtain a second-order scheme, and $\Phi^n_{m,i+\frac{1}{2}}=0$ reduces to a first-order scheme. Then we have
\begin{equation} \label{eq:flux_mom_2}
    \mathcal{F}^n_{m,k+1,i+\frac{1}{2}} =\left \{
    \begin{aligned}
        \int_0^{\infty} \xi^{k+1} f_{m,i}^n d\xi + \int_{-\infty}^0 \xi^{k+1} f_{m,i+1}^n d\xi, \quad &\text{if } \bm l_m \cdot \bm e_1 >0, \\
        \int_0^{\infty} \xi^{k+1} f_{m,i+1}^n d\xi + \int_{-\infty}^0 \xi^{k+1} f_{m,i}^n d\xi, \quad &\text{if } \bm l_m \cdot \bm e_1 <0.
    \end{aligned}
    \right.
\end{equation}

To evaluate the integrals above, we define
\[
\langle \xi^{k+1} \rangle^{\pm} (v,\vartheta) = \int_{\xi \gtrless 0} \xi^{k+1} \frac{1}{\sqrt{2\pi \vartheta}} \exp \left( -\frac{(\xi-v)^2}{2\vartheta} \right) d\xi,
\]
which can be computed analytically with recursive relations in terms of $k$ (Eq.~(B.1) in Ref.~\cite{Chalons2017}).
Thus, due to the ansatz Eq.~(\ref{eq:eqmom_assump}), the integrals in Eq.~(\ref{eq:flux_mom_2}) become
\[
\int_{\xi \gtrless 0} \xi^{k+1} f_{m,i} d\xi = \sum_{\alpha=1}^{N_m'} w_{m,\alpha,i} \langle \xi^{k+1} \rangle^{\pm} \left( v_{m,\alpha,i}, \vartheta_{m,i} \right).
\]
The moment inversion algorithm in Refs.~\cite{Chalons2017,MF2013} is used to solve $w_{m,\alpha,i}$, $v_{m,\alpha,i}$, and $\vartheta_{m,i}$ ($\alpha=1,\dots,N_m'$) from the moments $M_{m,0,i},\dots,M_{m,2N_m',i}$.

In the scheme Eq.~(\ref{eq:sch_dvdeqmom}), the moments correspond to the equilibrium state
\[
M^n_{\mathcal{E}m,k,i} = \rho^n_{m,i} \Delta_k \left( u^n_{m,i}, (\sigma^2)^n_{i} \right).
\]
The analytical expression of $\Delta_k(u,\sigma^2)$ is given in Ref.~\cite{HLY2020}. The equilibrium state parameters $\rho^n_{m,i}, \ u^n_{m,i}, \ (\sigma^2)^n_{i}$ are solved by the local fluid quantity $\bm \rho^n_{i}$, as detailed in Section \ref{sec:diseq_comp}.

\subsection{Computation of equilibrium} \label{sec:diseq_comp}
This subsection presents details of computing the discrete equilibrium $\mathcal{E_L}$ in Eq.~(\ref{eq:dvdm_gau}) for a given macroscopic quantity $\bm \rho$.
It can be done by minimizing $J(\bm \alpha)$ in Eq.~(\ref{eq:J_def}), which is strictly convex.
In this work we use the gradient descent method \cite{Curry1944} to solve $\mathcal{E_L}$. Future work is needed to develop stable and efficient higher-order optimization algorithms for this purpose.

The algorithm to minimize $J(\bm \alpha)$ is stated below.
\begin{enumerate}
    \item Let the initial value $\bm \alpha^{(0)} \in \mathbb{R}^{D+1} \times \mathbb{R}^-$ be given. It is found that the simple choice of $\bm \alpha_{eq}$ of the continuous equilibrium in Eq.~(\ref{eq:bgk}), even with $U=0$, works reasonably well in our test case.
    \item \textbf{Repeat}.
    \begin{enumerate}[(i)]
        \item	Update $\bm \alpha$ by
        $$\bm \alpha^{(n+1)} = \bm \alpha^{(n)} - l_r \nabla_{\bm \alpha} J(\bm \alpha^{(n)}).$$
        The gradient $\nabla_{\bm \alpha} J$ has the explicit form as:
        \[
        \nabla_{\bm \alpha} J = \left( \sum_{m=1}^N \rho_m, \ \sum_{m=1}^N \rho_m u_m \bm l_m, \ \sum_{m=1}^N \frac{\rho_m(u_m^2+\sigma^2)}{2} \right)^T - \bm \rho.
        \]
        Here $\rho_m,u_m,\sigma$ are defined in Eq.~(\ref{eq:stand_rel}).

        \item In each iteration, the learning rate $l_r \in (0,1)$ is backtracked based on the Armijo-Goldstein condition \cite{arm1966} and the constraint that $\alpha_{D+1}^{(n+1)} < 0$.

        \item The recovered fluid quantity from $\bm \alpha$, i.e., $\langle \bm m_{\mathcal{L}} \exp \left( \bm \alpha \cdot \bm m_{\mathcal{L}} \right) \rangle_{\mathcal{L}}$, is slightly different from $\bm \rho$ due to the round-off error.
        To avoid error accumulation, we re-assign the results $\rho_m^{(n+1)} \gets \rho_m^{(n+1)} \left( \rho / \sum_m \rho_m^{(n+1)} \right)$, or equivalently, $\alpha_0^{(n+1)} \gets \alpha_0^{(n+1)} + \ln \left( \rho / \sum_m \rho_m^{(n+1)} \right)$.
        Here $\alpha_0$ is the first component of $\bm \alpha$.
    \end{enumerate}

    \item \textbf{Until}: $\| \nabla_{\bm \alpha} J(\bm \alpha^{(n)}) \|_2 < \epsilon$.
\end{enumerate}

In this work we take $\epsilon=10^{-8}$.
The performance of the algorithm is illustrated in Fig.~\ref{fig:optm} with different numbers of velocity direction $N$.
It is shown that the stopping criterion can be achieved with no more than 160 steps of iteration, almost regardless of $N$ from 5 to 500.
Thus, the computation of discrete equilibrium in DVDM is numerically efficient.
\begin{figure}[hbp]
    \centerline{\includegraphics[width=2.2in]{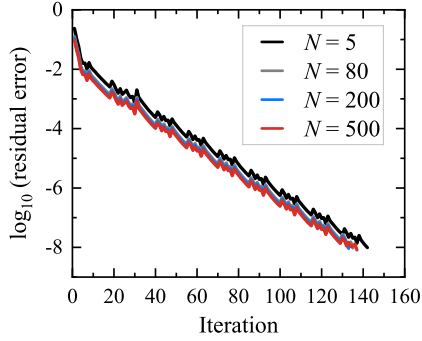}}
    \vspace*{8pt}
    \caption{Performance of the $\mathcal{E_L}$-algorithm for a given $\bm \rho \in \mathbb{R}^4$: $\rho=1, \ \bm U=(0.4,0.8)^T \text{ and } \theta=1$ with different numbers of velocity directions $N$ from 5 to 500. The directions are chosen to be $\bm l_m = (\cos \gamma_m, \sin \gamma_m)$ with $\gamma_m = \frac{m-1}{N} \pi$ for $m=1,\dots,N$.}
    \label{fig:optm}
\end{figure}

Solving the discrete equilibrium $\mathcal{E_L}$ is necessary for both DVD-DVM-II and DVD-EQMOM.
Nevertheless, for DVD-DVM-I, the discrete equilibrium is solved by minimizing $J(\bm \alpha)$ in Eq.~(\ref{eq:dvddvm_J}).
The difference between DVD-DVM-I and II is studied quantitatively in Fig.~\ref{fig:dvddvm}.
We see that for a given $\bm \rho \in \mathbb{R}^4$ and a fixed set of directions, DVD-DVM-I generally results in smaller entropy than II, but the difference almost disappears when the number of velocity nodes in each direction $N_m'>10$ while keeping the node distance $\Delta \xi$ unchanged (indicating that the nodes cover a greater range).
Thus, both DVD-DVM methods behave similarly in this case.

\begin{figure}[hbp]
    \centerline{\includegraphics[width=2.5in]{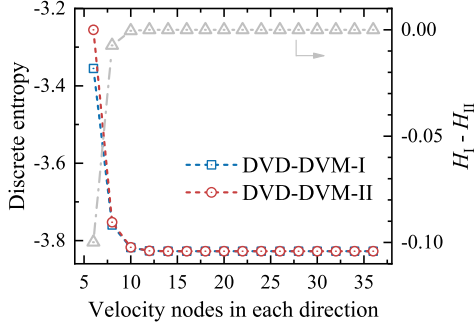}}
    \vspace*{8pt}
    \caption{Discrete entropies of DVD-DVM-I \& II (left y-axis) and their difference (right y-axis) for a given $\bm \rho \in \mathbb{R}^4$: $\rho=1$, $\bm U=(0.6,1.0)^T$, and $\theta=1.5$. The number of directions $N=8$, and the directions are chosen to be $\bm l_m = (\cos \gamma_m, \sin \gamma_m)$ with $\gamma_m = \frac{m-1}{8} \pi$ for $m=1,\dots,8$.
        The number of velocity nodes in each direction, $N_m'$ as abscissa, ranges from 6 to 36. The velocity nodes are $\xi_{mk} = \pm (k-0.5)$ for $k=1,\dots,N_m'/2$.}
    \label{fig:dvddvm}
\end{figure}

\begin{remark} [Temperature influence]
    For $D=2$, Eq.~(\ref{eq:real_th}) specifies a lower bound for the realizable temperature $\theta$. It is found that temperature has a considerable influence on the number of iterations. As revealed in Table~\ref{tab:t}, for given $\rho$ and $\bm U$, lower values of $\theta$ leads to substantially more iteration and longer CPU time.
\end{remark}

\begin{table}[ht]
    \centering
    \caption{CPU time for the $J(\bm \alpha)$-algorithm with $\rho=1$, $\bm U=(0.8,0.4)^T$ and different $\theta$}
    \begin{threeparttable}
    \begin{tabular}{c|cccccccccccc}
        \hline
        $\theta$ & 0.1 & 0.2 & 0.3 & 0.4 & 0.5 & 0.6 & 0.7 & 0.8 & 0.9 & 1.0 & 1.1 & 1.2 \\ \hline
        Time (ms) & 14.7 & 6.32 & 4.31 & 2.41 & 2.15 & 1.91 & 1.59 & 1.82 & 1.45 & 1.33 & 1.14 & 1.20 \\ \hline
    \end{tabular}
    \begin{tablenotes}
        \footnotesize
        \item The CPU time was recorded by running Matlab code on Intel(R) Core(TM) i7-1065G7. Here, 15 directions are chosen to be $\bm l_m = (\cos \gamma_m, \sin \gamma_m)$ with $\gamma_m=\frac{m-1}{15}\pi$ ($m=1,\dots,15$) and, according to Eq.~(\ref{eq:real_th}), the lower bound of $\theta$ is 0.0036.
    \end{tablenotes}
    \end{threeparttable}
     \label{tab:t}
\end{table}

\section{Numerical experiments}
\label{Sec:numsim}
Numerical results for two benchmark flow problems, the Couette flow and the 1-D Riemann problem, are reported in this section to show the performance of the DVDM-BGK models. We assume the velocity space to be two-dimensional, namely, $D=2$, $\bm \xi=(\xi_x, \xi_y)^T$ and $f=f(t,\bm x,\xi_x,\xi_y)$. Both DVD-EQMOM (Section \ref{subsec:dvd_eqmom}) and DVD-DVM (Section \ref{subsec:dvddvm}) are tested, together with the first-order schemes detailed in Section \ref{sec:numer}.

\subsection{Couette flow}

The first problem is the planar Couette flow between two infinite parallel walls located at $x=\pm H$ with a distance $L = 2H$. The left and right walls move with constant velocities $\pm v_w \bm e_y$. We set $H=0.5$ and $v_w=0.1$. The two walls drive the fluid between them from rest to a final steady state. This implies that the distribution $f$ is only dependent on $x$. We take the wall temperature $\theta_w=1$, and the Mach number $Ma=\frac{v_w}{\sqrt{\frac{D+2}{D}\theta}}=\frac{1}{10\sqrt{2}}\approx 0.0707$, representing a micro-Couette flow.

For the DVDM simulation, $N$ velocity directions are chosen to be $\bm l_m = (\cos \gamma_m, \sin \gamma_m)$ with $\gamma_m=\frac{m-1}{N}\pi$ ($m=1,...,N$). The initial temperature and velocity are taken to be 1 and $\bm 0$, respectively.
The 1-D computational domain $-0.5 < x < 0.5$ is discretized into 200 uniform cells with $\Delta x=0.005$. The time step ensures that the CFL number is less than 0.5.
While the boundary velocity and temperature are kept constant, the density at the boundaries are adjusted at every moment to ensure zero mass fluxes at the boundary.

A number of simulations for different flow regimes $\kappa = 0.1, \ 1, \text{ and } 10$ are conducted, where $\kappa = (\sqrt{\pi}/2)\rm Kn$ and the Knudsen number Kn is defined as \cite{Guo2013}
\[
{\rm Kn} = \frac{\lambda}{L} = \frac{\tau}{L}\sqrt{\frac{\pi \theta}{2}}.
\]
Here $\lambda$ denotes the mean free path of the molecules. Thus, varying $\tau$ in the source term of BGK equation gives the designed values of $\kappa$.

Fig.~\ref{fig:couette_dvdeqmom} shows the dimensionless velocity profiles for various $\kappa$ using two-node DVD-EQMOM (namely, $N_m'=2$ in Eq.~(\ref{eq:eqmom_assump})). Only the right side ($x>0$) is plotted due to symmetry. Increasing $\kappa$ from 0.1 to 10 (namely, increasing $\tau$) shifts the flow from the hydrodynamic regime (with velocity slip) to a free-molecular regime, and thus the flow is less efficient in following the moving boundaries. This property is successfully captured by all involved numbers of directions $N=5,\ 15, \text{ and } 45$.
For $N=5$, the DVD-EQMOM velocities for $\kappa=1.0 \text{ and } 10$ deviate from the DSMC results to a greater extent than that for $\kappa=0.1$. Increasing $N$ is effective in reducing the numerical errors especially for $\kappa=1.0 \text{ and } 10$.

\begin{figure}[hbp]
    \centerline{\includegraphics[width=3.8in]{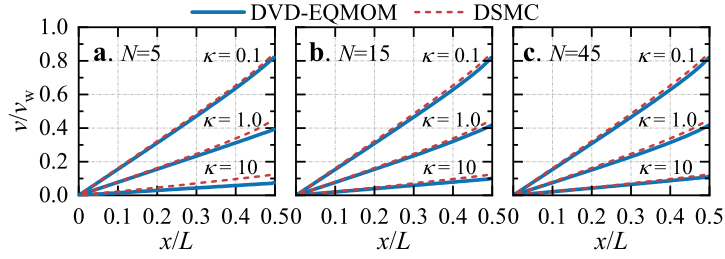}}
    \vspace*{8pt}
    \caption{Velocity profiles of the Couette flow for various Knudsen numbers $\kappa = (\sqrt{\pi}/2){\rm Kn}$ obtained with the two-node DVD-EQMOM. The number of velocity direction is varied as (a) $N=5$, (b) $N=15$, and (c) $N=45$. The directions are chosen to be $\bm l_m = (\cos \gamma_m, \sin \gamma_m)$ with $\gamma_m = \frac{m-1}{N}\pi$ for $m=1,\dots,N$. The DSMC results are extracted from Ref.~\protect\cite{Guo2013}.}
    \label{fig:couette_dvdeqmom}
\end{figure}

Fig.~\ref{fig:couette_dvddvm} presents the dimensionless velocity profiles for various $\kappa$ using both DVD-DVM-I and DVD-DVM-II with 15 prescribed velocity directions. In each direction 26 discrete velocities are selected as $\xi_{mk}=\pm(0.2k-0.1)$ for $k=1,\dots,13$. It is seen that both approaches yield very accurate velocity profiles as compared with the DSMC results. In particular, the nonlinearity of the velocity profiles near the wall is correctly reproduced.

\begin{figure}[hbp]
    \centerline{\includegraphics[width=2.54in]{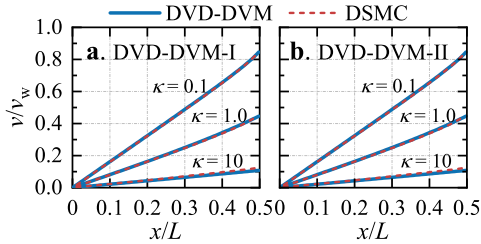}}
    \vspace*{8pt}
    \caption{Velocity profiles of the Couette flow for various Knudsen numbers $\kappa = (\sqrt{\pi}/2){\rm Kn}$ obtained with (a) DVD-DVM-I and (b) DVD-DVM-II. The number of directions $N=15$, and the directions are chosen to be $\bm l_m = (\cos \gamma_m, \sin \gamma_m)$ with $\gamma_m = \frac{m-1}{15}\pi$ for $m=1,\dots,15$. The velocity nodes in each direction are $\xi_{mk} = \pm (0.2k-0.1)$ for $k=1,\dots,13$. The DSMC results are extracted from Ref.~\protect\cite{Guo2013}.}
    \label{fig:couette_dvddvm}
\end{figure}

Fig.~\ref{fig:coshear} further shows the shear stress
$$\tau_{xy}= \langle (\xi \cos \gamma_{\mathcal{L}} - u) (\xi \sin \gamma_{\mathcal{L}} - v) f_{\mathcal{L}} \rangle_{\mathcal{L}},\quad (u,v)^T = \bm U,$$
normalized by the free-molecular stress $\tau_{\infty}$ for a variety of Knudsen numbers. The results (dots) are obtained with the two-node DVD-EQMOM and DVD-DVM-II with 15 prescribed velocity directions. The shear stress increases with Kn. The two-node DVD-EQMOM seems to overestimate the shear stress for a wide range of Kn; By contrast, the DVD-DVM-II solutions agree well with the DSMC results in the whole flow regimes.

\begin{figure}[hbp]
    \centerline{\includegraphics[width=2.2in]{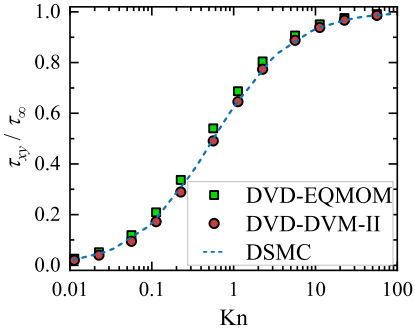}}
    \vspace*{8pt}
    \caption{Normalized stress of the Couette flow at different Knudsen numbers. The two-node DVD-EQMOM and DVD-DVM-II are both solved with 15 discrete directions $\bm l_m = (\cos \gamma_m, \sin \gamma_m)$ and $\gamma_m = \frac{m-1}{15}\pi$ for $m=1,\dots,15$. For the DVD-DVM-II, the velocity nodes in each direction are $\xi_{mk}=\pm (0.2k-0.1)$ for $k=1,\dots,13$. The DSMC results are extracted from Ref.~\protect\cite{Guo2013}.}
    \label{fig:coshear}
\end{figure}

\subsection{1-D Riemann problems}
Two 1-D Riemann problems from Refs.~\cite{Chalons2017,Fox2008} are solved in this subsection. The initial macroscopic data are
\begin{enumerate}
    \item Problem (A): $\rho(0,x)=1$ and $\theta(0,x)=\frac{1}{6}$ for all $x \in \mathbb{R}$; $U(0,x<0)=(1,0)^T$ and $U(0,x>0)=(-1,0)^T$.
    \item Problem (B): $\rho(0,x<0) = 3.093$ and $\rho(0,x>0)=1$; $U(0,x)=(0,0)^T$ and $\theta(0,x)=1$ for all $x \in \mathbb{R}$.
\end{enumerate}
The corresponding initial distribution $f=f(t,x,\xi_x,\xi_y)$ are taken to be in equilibrium determined by the macroscopic data above.

As for the collision term, two limiting cases are considered: (\textit{i}) $\tau=\infty$ and thus the collision term vanishes. In this case, the distribution $f$ is a traveling wave and its analytical expression is given in Ref.~\cite{Chalons2017}; (\textit{ii}) $\tau=0$ corresponds to the classical Euler equation for an inviscid compressible fluid.
Notice that the specific heat ratio in the analytical expression \cite{Toro2009} should be taken as 2 due to the 2-D velocity space assumption.

In the DVDM simulation, 15 velocity directions are chosen to be $\bm l_m = (\cos \gamma_m, \sin \gamma_m)$ with $\gamma_m=\frac{m-1}{15}\pi$ for $m=1,\dots,15$. For the DVD-EQMOM, we set $N_m'=2$; for the DVD-DVM-II, the velocity nodes in each direction are chosen as $\xi_{mk}=\pm(0.2k-0.1)$ for $k=1,\dots,13$. The computational domain is taken to be $-2<x<2$ and is discretized into 400 uniform cells with $\Delta x = 0.01$. The time step ensures that the CFL number is less than 0.5.
For $\tau=0$, the collisions reset the variables (either discretized distributions or moments) to be solved as the equilibrium states at the end of each time step.

Figs.~\ref{fig:1DRA_NC} \& \ref{fig:1DRA_Eu} exhibit the spatial distribution of macroscopic quantities at $t=0.2$ for Problem (A) with $\tau=\infty$ (no collision) and $\tau=0$ (ultrafast collision), respectively. For $\tau=\infty$, the DVD-DVM-II results agree reasonably well with the analytic solution, while the two-node EQMOM is less accurate in the region of $-0.5<x<0.5$. In particular, the density at $x=0$ is underestimated, and the simulated energy (or temperature) at $x=0$ is greater than the true value.
For $\tau=0$, both models well capture the shock waves propagating towards both sides, but there are some inaccuracies in the central region around $x=0$ (the contact surface). The inaccuracies may be attributed to either the DVDM assumptions or numerical schemes. It is interesting to remark that such a deviation around $x=0$ was also observed in the two-node EQMOM solutions in Refs.~\cite{Chalons2017,HLY2020}.

\begin{figure}[hbp]
    \centerline{\includegraphics[width=3.85in]{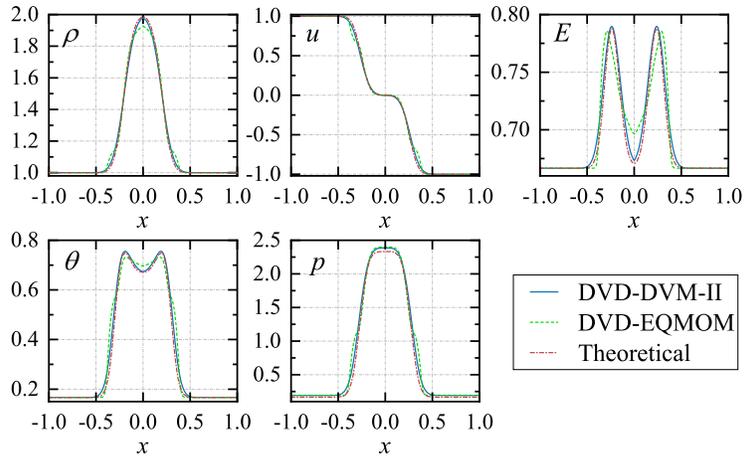}}
    \vspace*{8pt}
    \caption{Macroscopic quantities at $t=0.2$ for the Riemann problem (A) with $\tau=\infty$ (i.e., no collision).}
    \label{fig:1DRA_NC}
\end{figure}

\begin{figure}[hbp]
    \centerline{\includegraphics[width=3.85in]{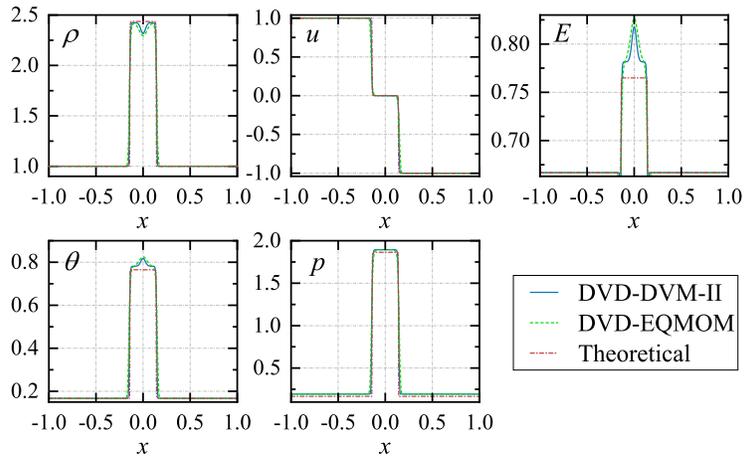}}
    \vspace*{8pt}
    \caption{Macroscopic quantities at $t=0.2$ for the Riemann problem (A) with $\tau=0$ (i.e., ultrafast collision).}
    \label{fig:1DRA_Eu}
\end{figure}

Figs.~\ref{fig:1DRB_NC} \& \ref{fig:1DRB_Eu} illustrate the spatial distribution of macroscopic quantities at $t=0.2$ for Problem (B) with $\tau=\infty$ (no collision) and $\tau=0$ (ultrafast collision), respectively. As in Problem (A), for $\tau=\infty$ the DVD-DVM is sufficiently accurate in this free-moving situation, while the two-node EQMOM introduces non-negligible error for the energy (or temperature) profile. By contrast, in the Euler limit $\tau=0$, both models exhibit correctly the rarefaction wave, contact discontinuity and shock wave. However, the discontinuities are less sharper than the theoretical solution, especially for $\rho$ and $u$, which may be mainly caused by the first-order schemes. In comparision with the DVD-DVM, the two-node DVD-EQMOM is less accurate in describing the right-moving shock wave located at $x \approx 0.35$.

\begin{figure}[hbp]
    \centerline{\includegraphics[width=3.85in]{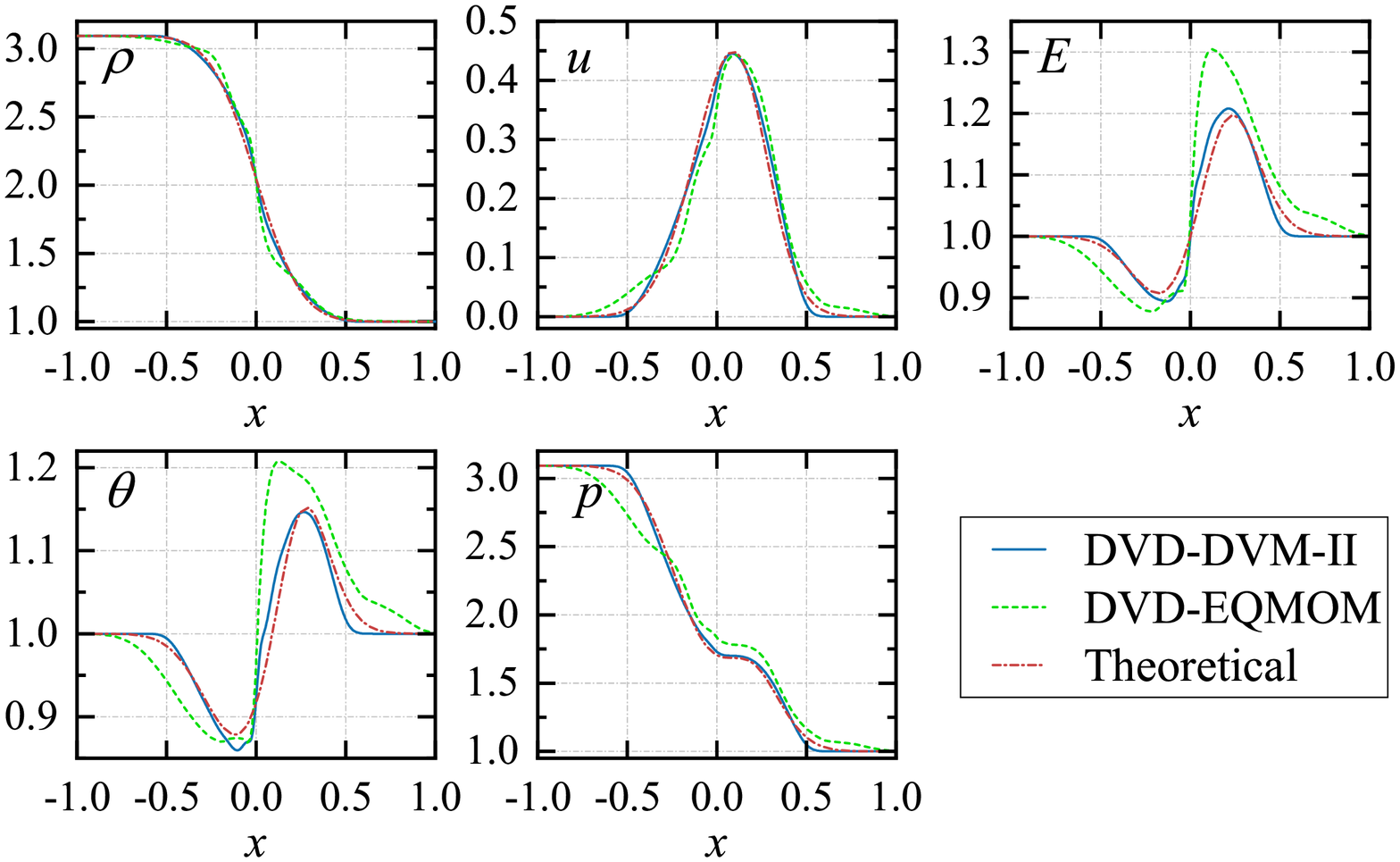}}
    \vspace*{8pt}
    \caption{Macroscopic quantities at $t=0.2$ for the Riemann problem (B) with $\tau=\infty$ (i.e., no collision).}
    \label{fig:1DRB_NC}
\end{figure}

\begin{figure}[hbp]
    \centerline{\includegraphics[width=3.85in]{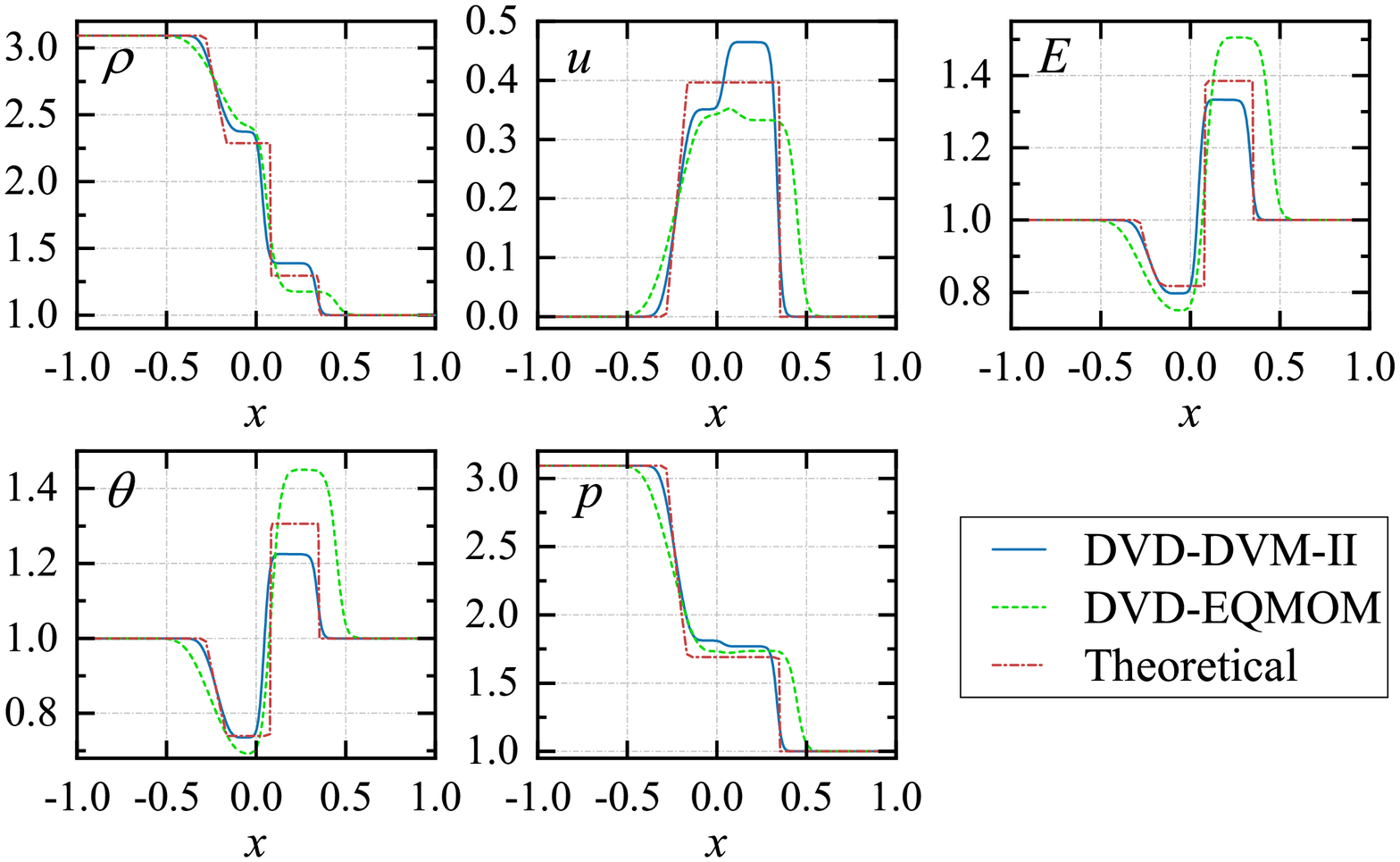}}
    \vspace*{8pt}
    \caption{Macroscopic quantities at $t=0.2$ for the Riemann problem (B) with $\tau=0$ (i.e., ultrafast collision).}
    \label{fig:1DRB_Eu}
\end{figure}

\begin{remark}
    In solving Problem (B) with $\tau=0$ and 15 directions (results given in Fig.~\ref{fig:1DRB_Eu}), the CPU times of the 2-node DVD-EQMOM and DVD-DVM-II (26 nodes in each direction) are 120 s and 500 s, respectively. The CPU times were recorded by running Matlab codes on Intel(R) Core(TM) i7-1065G7.
\end{remark}

\section{Conclusions and Perspectives}
\label{Sec:concl}

In this paper, we present a novel discrete-velocity-direction model (DVDM) with a minimum entropy principle. As a semi-continuous discretization model of BGK-type, it assumes that the molecule velocity has a few prescribed directions, but the velocity modulus is still continuous. The Maxwellian equilibrium is defined as a minimizer of a discrete entropy subject to conservation laws of density, momentum and energy.
We show that the discrete equilibrium is uniquely determined by macroscopic fluid quantities computed with nonnegative distributions generating finite density and energy, implying that the DVDM is well defined.
Numerically, the discrete equilibrium with a given macroscopic flow state can be computed efficiently by convex optimization algorithms. Moreover, the model ensures positivity of the solutions and has a proper version of $H$-theorem.

The proposed model has several advantages. Firstly, it provides a new way in choosing discrete velocities for the computational practice of the conventional discrete-velocity methodology.
In this way, we replace the continuous one-dimensional velocity moduli of DVDM with finite nodes in each direction and derive new discrete-velocity models (called DVD-DVM).
Furthermore, the 1-D EQMOM can be implemented in each prescribed discrete direction to derive multidimensional hyperbolic EQMOM conveniently.
As reported in the literature \cite{Chalons2017,MF2013}, 1-D EQMOM is quite successful in a number of applications, and it also has good mathematical properties \cite{HLY2020}.
However, a satisfactory multidimensional version of EQMOM is not available for a long time.
In this sense, we have offered a solution to the multidimensional extension of the 1-D EQMOM.
To show the performance of the above models, we simulate two benchmark flows, the Couette flow and 1-D Riemann problems, with reasonable results at moderate computational costs.

Potential focuses of future work can be directed to: (\textit{i}) studying the DVDM mathematically, including the realizable condition and the structural stability condition \cite{HLY2020,Yong1999}; (\textit{ii}) simulating real-world flows with 3-D velocity space and more complex boundary conditions with higher-order numerical schemes.

\section*{Acknowledgment}
This work is supported by the National Natural Science Foundation
of China (Grant no. 51906122 and 12071246) and National Key Research and Development Program of China (Grant no. 2021YFA0719200). The authors acknowledge Mr. Jialiang Zhou, Mr. Qiqi Rao and Prof. Shuiqing Li at Tsinghua University for valuable discussions.

\appendix

\section{Realizabily of the DVDM}
\label{sec:appendix_realize}
In this appendix, we show that for the two-dimensional DVDM, more macroscopic states can be realized with more discrete-velocity directions.
In this case, the discrete-velocity directions can be expressed as $\bm l_m = (\cos \gamma_m, \sin \gamma_m)^T$ with $\gamma_m \in [0,\pi)$ and therefore can be understood as the complex numbers $e^{i \gamma_m}$. Without loss of generality, we assume $0 \le \gamma_1 < \dots < \gamma_N < \pi$. The main result here is the following lemma.

\begin{lemma} \label{lem:2dSumMin}
    For any $N$-tuple $(a_m)_{m=1}^N \in \mathbb{R}$ such that $$\sum_{m=1}^N a_m e^{i \gamma_m} = 1,$$
    we have
    \begin{equation} \label{eq:minam}
        \sum_{m=1}^N |a_m| \ge \frac{\sin \gamma_1 + \sin \gamma_N}{\sin (\gamma_N - \gamma_1)}.
    \end{equation}
    The minimum is attained when
    \[
    a_1=\frac{\sin \gamma_N}{\sin (\gamma_N - \gamma_1)}, \ a_N=\frac{-\sin \gamma_1}{\sin (\gamma_N - \gamma_1)}, \text{ and } a_2=\dots=a_{N-1}=0.
    \]
\end{lemma}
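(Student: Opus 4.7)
The plan is to prove the bound by duality, viewing the problem as an $\ell^1$-minimization subject to two real linear constraints. Writing the complex constraint $\sum_{m} a_m e^{i\gamma_m} = 1$ as the real system
\[
\sum_{m=1}^N a_m \cos\gamma_m = 1, \qquad \sum_{m=1}^N a_m \sin\gamma_m = 0,
\]
I would look for a ``test weight'' $(c_m)_{m=1}^N$ of the form $c_m = \mu_1 \cos\gamma_m + \mu_2 \sin\gamma_m$ such that $|c_m|$ is uniformly bounded by the same constant across $m$, while $\sum a_m c_m$ reduces (via the constraints) to a simple expression in $\mu_1$. Applying $|\sum a_m c_m| \le (\max_m |c_m|) \sum_m |a_m|$ then yields the desired bound.

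First I would choose $c_m := \sin\!\bigl(\tfrac{\gamma_1+\gamma_N}{2} - \gamma_m\bigr)$, which expands as $\sin\!\bigl(\tfrac{\gamma_1+\gamma_N}{2}\bigr)\cos\gamma_m - \cos\!\bigl(\tfrac{\gamma_1+\gamma_N}{2}\bigr)\sin\gamma_m$. Since $\gamma_m \in [\gamma_1,\gamma_N]$ and $\gamma_N-\gamma_1 < \pi$, the argument $\tfrac{\gamma_1+\gamma_N}{2}-\gamma_m$ lies in the interval $\bigl[-\tfrac{\gamma_N-\gamma_1}{2},\tfrac{\gamma_N-\gamma_1}{2}\bigr] \subset (-\tfrac{\pi}{2},\tfrac{\pi}{2})$, where $|\sin|$ is monotone in $|\cdot|$; therefore
\[
|c_m| \le \sin\!\Bigl(\tfrac{\gamma_N-\gamma_1}{2}\Bigr),
\]
with equality exactly at $m=1$ and $m=N$ (with opposite signs).

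Next I would multiply the constraints to obtain $\sum_m a_m c_m = \sin\!\bigl(\tfrac{\gamma_1+\gamma_N}{2}\bigr)\cdot 1 - \cos\!\bigl(\tfrac{\gamma_1+\gamma_N}{2}\bigr)\cdot 0 = \sin\!\bigl(\tfrac{\gamma_1+\gamma_N}{2}\bigr)$, which together with the triangle inequality gives
\[
\sin\!\Bigl(\tfrac{\gamma_1+\gamma_N}{2}\Bigr) \le \sin\!\Bigl(\tfrac{\gamma_N-\gamma_1}{2}\Bigr) \sum_{m=1}^N |a_m|.
\]
The claimed bound then follows at once from the product-to-sum identities $\sin\gamma_1+\sin\gamma_N = 2\sin\!\bigl(\tfrac{\gamma_1+\gamma_N}{2}\bigr)\cos\!\bigl(\tfrac{\gamma_N-\gamma_1}{2}\bigr)$ and $\sin(\gamma_N-\gamma_1) = 2\sin\!\bigl(\tfrac{\gamma_N-\gamma_1}{2}\bigr)\cos\!\bigl(\tfrac{\gamma_N-\gamma_1}{2}\bigr)$.

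Finally, to exhibit equality I would substitute the stated values $a_1 = \sin\gamma_N/\sin(\gamma_N-\gamma_1)$, $a_N = -\sin\gamma_1/\sin(\gamma_N-\gamma_1)$ and $a_2=\cdots=a_{N-1}=0$ into the constraint $\sum a_m e^{i\gamma_m}=1$ (a routine verification using $\sin(\gamma_N-\gamma_1) = \sin\gamma_N \cos\gamma_1 - \cos\gamma_N \sin\gamma_1$) and compute $|a_1|+|a_N|$ to match the right-hand side. The only genuinely creative step is guessing the test weight $c_m$; everything afterwards is a direct triangle-inequality estimate together with elementary trigonometric identities. Geometrically, the choice of $c_m$ is dictated by requiring that its maximum on the discrete set $\{\gamma_m\}$ coincides with its maximum on the whole interval $[\gamma_1,\gamma_N]$ and is attained precisely at the two endpoints, which is exactly the LP-duality optimality condition.
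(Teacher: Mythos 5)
Your proposal is correct, and it proves the lemma by a genuinely different route than the paper. You construct an explicit dual certificate $c_m=\sin\bigl(\tfrac{\gamma_1+\gamma_N}{2}-\gamma_m\bigr)$, observe that $|c_m|\le\sin\bigl(\tfrac{\gamma_N-\gamma_1}{2}\bigr)$ because $\tfrac{\gamma_1+\gamma_N}{2}-\gamma_m$ stays in $\bigl(-\tfrac{\pi}{2},\tfrac{\pi}{2}\bigr)$, pair $c_m$ with the two real constraints to get $\sum_m a_m c_m=\sin\bigl(\tfrac{\gamma_1+\gamma_N}{2}\bigr)>0$, and finish with the triangle inequality and product-to-sum identities; the computations all check out, and the attainment verification is routine as you say. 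The paper instead argues geometrically: it sets $z_m=a_m e^{i\gamma_m}$, splits the nonreal terms into the partial sums $z^+$ and $z^-$ according to the sign of the imaginary part, applies the law of sines to the triangle formed by $z^+$, $z^-$ and $1$, and then uses monotonicity of $S(\gamma_+,\gamma_-)=\frac{\sin\gamma_+-\sin\gamma_-}{\sin(\gamma_+-\gamma_-)}$ together with the angular bounds $\gamma_+\ge\gamma_1$, $\gamma_-\le\gamma_N-\pi$; the case $\gamma_1=0$ is treated separately there. Your LP-duality argument is shorter and more uniform (no special case at $\gamma_1=0$), and equality in the Hölder step immediately identifies the extremal support $\{1,N\}$ with the correct signs, which is exactly the optimality condition you mention; the paper's proof, while longer, gives more geometric insight into how mass redistributes between the upper and lower half-plane contributions and yields the monotonicity of the bound in the extreme arguments, which is what it exploits to localize the minimizer at the endpoint directions.
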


\begin{proof}
    Set $z_m = a_m e^{i \gamma_m}$. According to the assumption, we have $\sum_m z_m = 1$ and
    $$\sum_m |a_m| = \sum_m |z_m| \ge \left \lvert \sum_m z_m \right \rvert = 1.$$
    Therefore the lemma obviously holds with $\gamma_1=0$.

    For $\gamma_1>0$, since $\gamma_1 \leq \gamma_m<\pi$ for all $m$, $z_m$ is not a real number unless $z_m=0$. Thus, from $\sum_m z_m=1$ it follows that there must be some $z_m$ with positive imaginary part (denote their sum by $z^+$) and some $z_m$ with negative imaginary part (denote their sum by $z^-$). Obviously, we have $\gamma_+ := \arg z^+ \in (0,\pi)$ and $\gamma_- := \arg z^- \in (-\pi,0)$. Note that $z^+ + z^- = 1$, meaning that the three complex numbers $z^+$, $z^-$ and $1$ constitute a triangle (after a proper shift of $z^-$) in the complex plane. Then the angle $(\gamma_+ - \gamma_-)$ from $z^-$ to $z^+$ is in $(0, \pi)$. We thus see from the law of sines:
    \[
    \frac{1}{\sin(\gamma_+ - \gamma_-)} = \frac{|z^+|}{\sin (- \gamma_-)} = \frac{|z^-|}{\sin \gamma_+}
    \]
    that
    \[
    \sum_m |a_m| = \sum_m |z_m| \ge |z^+| + |z^-| = \frac{\sin \gamma_+ - \sin \gamma_-}{\sin (\gamma_+ - \gamma_-)} =: S(\gamma_+, \gamma_-),
    \]
    which is monotonically increasing with $\gamma_+$ and decreasing with $\gamma_-$. This can be seen by computing the derivatives:
    \[
    \begin{aligned}
        \frac{\partial S}{\partial \gamma_+}
        & = \frac{\cos \gamma_+ \sin(\gamma_+ - \gamma_-) - (\sin \gamma_+ - \sin \gamma_-) \cos(\gamma_+ - \gamma_-)}{\sin^2(\gamma_+ - \gamma_-)} \\
        & = \frac{\sin \gamma_- (\cos(\gamma_+ - \gamma_-) - 1)}{\sin^2(\gamma_+ - \gamma_-)}
        \geq 0,
    \end{aligned}
    \]
    and similarly $\frac{\partial S}{\partial \gamma_-} \leq 0$. Geometrically we can easily see that $\gamma_+ \ge \gamma_1$ and $\gamma_- \le \gamma_N - \pi$.
    Thus, the minimum of $|z^+| + |z^-|$ is attained when $\gamma_+ = \gamma_1$ and $\gamma_- = \gamma_N - \pi$.
    This corresponds to $a_2=\dots=a_{N-1}=0$ and leads exactly to the RHS of Eq.~(\ref{eq:minam}).
\end{proof}

With this lemma, we can explicitly write down the constraint on temperature $\theta$. To do this, we denote by $\gamma_{\bm U} \in (-\pi,\pi]$ the argument of the flow velocity $\bm U$ in Eq.~(\ref{eq:relaxed_U}). Then the second equality in Eq.~(\ref{eq:relaxed_U}) can be rewritten as
\[
\sum_{m=1}^N \hat\rho_m \hat u_m e^{i \gamma_m} = \rho U e^{i \gamma_{\bm U}}
\]
or
\begin{equation}\label{eq:A2}
    \sum_{m=1}^N s\hat\rho_m \hat u_m e^{i (\gamma_m - \eta)} = \rho U,
\end{equation}
where
\[
s = \left \{
\begin{aligned}
    1, \quad &\text{if } 0 < \gamma_{\bm U} \le \pi, \\
    -1, \quad &\text{if } -\pi < \gamma_{\bm U} \le 0,
\end{aligned}
\right.
\]
and $\eta = \gamma_{\bm U} - \frac{s-1}{2}\pi \in (0,\pi]$.

Let $k$ ($1\le k \le N$) be such that
\begin{equation*}
    \gamma_k < \eta \le \gamma_{k+1}.
\end{equation*}
For convenience, we define $\gamma_{N+1}=\pi + \gamma_1$.
Then we set
\[
\tilde \gamma_m =
\left\{
\begin{aligned}
    & \gamma_m + \pi - \eta,& 1\leq m\leq k, \\
    & \gamma_m - \eta, & k+1\leq m \leq N.
\end{aligned}
\right.
\]
It is clear that
$0\leq \tilde\gamma_{k+1}<\cdots < \tilde \gamma_N < \tilde \gamma_1 < \cdots < \tilde \gamma_k < \pi$. Now we apply the lemma above to Eq.~(\ref{eq:A2}) to get
$$
U^2 \left( \frac{\sin \tilde \gamma_{k+1} + \sin \tilde \gamma_k}{\sin (\tilde \gamma_k - \tilde \gamma_{k+1})} \right)^2
\le
\frac{1}{\rho^2}\left( \sum_{m=1}^N |\hat\rho_m \hat u_m| \right)^2
<
2 E,$$
where the last inequality is Eq.~(\ref{eq:3.2}). The left-hand side
is
\[
U^2\left( \frac{\sin(\gamma_{k+1}-\eta) + \sin(\eta - \gamma_k)}{\sin (\gamma_{k+1} - \gamma_k)} \right)^2,
\]
which obviously goes to $U^2$ as $\gamma_{k+1}-\gamma_k \rightarrow 0$. Moreover, the constraint for $\theta = (2E - U^2)/2$ becomes
\begin{equation}\label{eq:real_th}
    \theta > U^2 \frac{\sin(\gamma_{k+1}-\eta) \sin(\eta - \gamma_k)}{1+ \cos(\gamma_{k+1} - \gamma_k)}.
\end{equation}
This lower bound approaches 0 when $\gamma_{k+1} - \gamma_k \to 0$.

\end{document}